\documentclass[11pt, reqno]{amsart}
\usepackage[a4paper,margin=2.5cm]{geometry}

\usepackage{amsmath,amssymb,amsfonts,amscd,mathtools}
\usepackage[T1]{fontenc}
\usepackage{graphicx}
\usepackage{booktabs}
\usepackage{array}
\usepackage{bbm}
\usepackage{bm}
\usepackage{dsfont}
\usepackage{mathrsfs}
\usepackage{empheq}
\usepackage{latexsym}
\usepackage{verbatim}
\usepackage{cite}
\usepackage{xcolor}
\usepackage{caption}
\usepackage{subcaption}
\usepackage{soul}
\usepackage{ytableau}
\usepackage{youngtab}
\usepackage{braket}
\usepackage{slashed}
\usepackage{tcolorbox}
\tcbuselibrary{theorems,skins}

\usepackage{tikz}
\usepackage{tikz-cd}
\usetikzlibrary{matrix,shapes.misc,positioning,decorations.pathreplacing,
  angles,quotes,decorations.pathmorphing}

\usepackage{hyperref}
\hypersetup{
  colorlinks=true,
  linkcolor=blue,
  citecolor=blue,
  urlcolor=blue,
  filecolor=magenta,
  linktocpage=true
}
\usepackage[nameinlink,noabbrev]{cleveref}
\crefname{figure}{Fig.}{Figs.}
\Crefname{figure}{Fig.}{Figs.}
\crefname{equation}{Eq.}{Eqs.}
\Crefname{equation}{Eq.}{Eqs.}
\crefname{section}{Section}{Sections}
\Crefname{section}{Section}{Sections}

\numberwithin{equation}{section}
\renewcommand{\Re}{\mathrm{Re}}
\renewcommand{\Im}{\mathrm{Im}}
\renewcommand{\sl}{\sqrt{\lambda}}
\newcommand{\ev}[1]{\big\langle #1 \big\rangle}
\newcommand{\dd}{\mathrm{d}}

\definecolor{bleudefrance}{rgb}{0.19,0.55,0.91}

\def\cO{\mathcal{O}}
\newcommand{\be}{\begin{equation}}
\newcommand{\ee}{\end{equation}}
\newcommand{\nn}{\nonumber}

\newcommand{\Nf}{\mathcal{N}=4}

\newcommand{\F}{\mathcal{F}}

\newcommand{\D}{\mathcal{D}}

\newcommand{\lb}{\left(}
\newcommand{\rb}{\right)}

\newcommand{\opchi}{\boldsymbol\chi}

\newcommand{\Lo}{L^{\rm odd}}
\newcommand{\Le}{L^{\rm even}}
\DeclareMathOperator{\Tr}{Tr}

\renewcommand{\leq}{\leqslant}
\renewcommand{\geq}{\geqslant}

\newcommand{\QQ}{\mathbbm{Q}}

\newtheorem{thm}{Theorem}[section]

\newtheorem{lem}[thm]{Lemma}

\title[Giant Wilson loops]{Giant Wilson loops from Fredholm determinants}
\author{Ludvig Chen}
\email{ludvig.chen@physics.uu.se}
\author{Victor Mishnyakov}
\email{mishnyakovvv@gmail.com}
\author{Konstantin Zarembo}
\email{zarembo@nordita.org}

\begin{document}
\begin{abstract}
We study large-$k$ symmetric and antisymmetric Wilson loops in $\mathcal{N}=4$ super-Yang-Mills theory. We represent their generating functions as Fredholm determinants built from the Hermite kernel weighted by a symbol function that encodes the representation. Using the integrable structure of the kernel, we systematically apply the Tracy-Widom approach to derive a closed system of integro-differential equations. The equations are then solved order by order in $1/N$ by a WKB ansatz, while keeping the symbol function general. We find a closed-form expression for the leading and subleading order contributions to the generating functions. Finally, specializing to the two representations, we perform the large-$k$ saddle-point analysis and find the subleading contribution to the Wilson loop. Our results confirm the previously observed discrepancies with the dual D-brane description of giant Wilson loops.
\end{abstract}

\maketitle
\tableofcontents
\pagestyle{plain}

\section{Introduction}

The circular Wilson loop in the $\mathcal{N}=4$ super-Yang-Mills theory (SYM) is
an exceptional observable with remarkable mathematical properties. Its expectation value is described by the Gaussian matrix integral \cite{Erickson:2000af,Drukker:2000rr}, by virtue of supersymmetric localization \cite{Pestun:2007rz}. As such it can be computed exactly \cite{Drukker:2000rr} or in the large-$N$ limit \cite{Erickson:2000af} and interpolated from weak to strong coupling while keeping a full control over the intermediate regime, offering thereby a concrete realization of string worldsheet emergence from a sum of infinitely many planar diagrams. 

An extension to giant Wilson loops, namely Wilson loops in higher representations whose rank scales linearly with $N$, gives access to all orders in the $1/N$ expansion. These observables are dual to D-branes and probe the AdS/CFT correspondence beyond the planar limit. A truly spectacular agreement observed at the classical level \cite{Drukker:2005kx, Yamaguchi:2006tq, Hartnoll:2006is} unfortunately does not extend to the quantum domain. An explicit account for D-brane fluctuations \cite{Faraggi:2011ge, Buchbinder:2014nia,Faraggi:2014tna} has remained in persistent disagreement with the predictions of the matrix model \cite{Chen-Lin:2016kkk, Gordon:2017dvy}. In one dimension lower, in ABJM, the M2-brane fluctuations reproduce $1/N$ corrections to Wilson loops perfectly well \cite{Giombi:2023vzu}. In light of this success and the existing discrepancies it is important to reassess the 4d calculations on both sides. In this paper we revisit the matrix-model calculation of giant Wilson loops in the rank-$k$ symmetric and antisymmetric representations.

The key technical novelty of our approach is the systematic application of the Tracy--Widom method \cite{Tracy:1993ah}, traditionally used to characterize eigenvalue correlations in random matrix theory \cite{Tracy:1992rf}. The same framework can be adapted to compute and analyze Wilson-loop expectation values, opening a new route to the matrix-model description of giant Wilson loops. 

The starting point of this method is a compact determinant representation for the generating function of antisymmetric  Wilson loops  \cite{Fiol:2013hna} and its extension to the symmetric case  \cite{Okuyama:2018aij} enabled by a remarkable connection to the truncated harmonic oscillator   \cite{Fiol:2013hna}. The oscillator algebra and its coordinate-space wavefunctions have already been used to set up the large-$N$ expansion of the generating function \cite{CanazasGaray:2018cpk}. We go one step further and replace the $N\times N$ determinant by a Fredholm determinant of an integral operator with the Hermite kernel. What might look as an unnecessary complication eliminates the cumbersome finite-$N$ projection of the Hilbert space and opens an avenue for conventional WKB methods at large $N$.  It also puts symmetric and antisymmetric representations on an equal footing allowing us to treat them in parallel. More importantly, the Fredholm representation embeds the formalism into the general theory of integral operators with integrable kernels \cite{1990IJMPB...4.1003I, Tracy:1993ah}. Integrable Fredholm determinants arise in other observables of the $\mathcal{N}=4$ SYM \cite{Belitsky:2019fan,Belitsky:2020qrm}, and computational techniques developed in this context were an important inspiration for our work, even though technical details differ when the method is applied to Wilson loops.

\subsection{Summary of the results}

We consider the locally BPS Wilson loop of the $\mathcal{N}=4$ SYM in a representation $R$ of $U(N)$:
\begin{equation}
 W_R(C)=\left\langle\mathrm{Tr}_R\,
 P\exp\oint\limits_C \dd s\,\left(i\dot{x}^\mu A_\mu +|\dot{x}|n^i\Phi _i\right)
 \right\rangle_{\rm SYM}.
\end{equation}
The spacial contour $C$ has a circular shape and  $n^i$ is a constant six-dimensional vector of the unit norm. The path integral for the circular loop localizes on the scalar zero modes \cite{Pestun:2007rz} and reduces to an expectation value in the 
Gaussian matrix model 
\cite{Erickson:2000af,Drukker:2000rr}:
\begin{equation}
     W_R(N,\lambda) = \big\langle \Tr_{R} e^X \big\rangle.
\end{equation}
The bracket now denotes an average over constant Hermitian  $N\times N$ matrices:
\begin{equation}
    \ev{f(X)} = \frac1{Z_0}\int \dd X \exp\left(- \dfrac{2N}{\lambda}\Tr X ^2 \right) f(X).
\end{equation}
Here $\lambda=g_{\rm YM}^2N$ is the SYM 't Hooft coupling, and the matrix model expectation value is normalized by the partition function
\begin{equation}
    Z_0= \int \dd X \exp\left(- \dfrac{2N}{\lambda}\Tr X ^2 \right),
\end{equation}
ensuring that $W_{\varnothing}=1$.

Importantly, localization works for Wilson loops in any representation. The representation label provides an additional parameter and facilitates connection to
D-branes  \cite{Gomis:2006sb,Gomis:2006im}. Indeed, it is the D-branes that furnish an appropriate holographic description of the giant Wilson loops considered here \cite{Drukker:2005kx}. The giant loops belong to parametrically large representations for which the conventional ’t~Hooft counting breaks down, necessitating specialized techniques for computing expectation values at large $N$.

In this paper we focus on Wilson loops in the rank-$k$ symmetric $S_k$
and antisymmetric $A_k$ representations of $U(N)$\footnote{
There is a simple relation between 
the $U(N)$ and $SU(N)$ expectation values. For a Wilson loop in the representation $R$ with  $k$ boxes in the Young diagram, $W^{SU(N)}_R = e^{-\frac{\lambda k^2}{8N^2}}W^{U(N)}_R$, see \cite{Drukker:2000rr} and \cite{Gordon:2017dvy}.},
which correspond to the Young diagrams 
\begin{equation}
S_k = \begin{ytableau}
~ & ~ & ~ & \none[\dots] & ~
\end{ytableau}
\hspace{-8em}
\underbrace{\hspace{8em}}_{\text{\normalsize $k$ boxes}}\qquad \text{and }\qquad
A_k = \begin{tabular}{c}
\begin{ytableau}
~ \\ ~ \\ ~ \\ \none[\vdots] \\ ~
\end{ytableau}
\end{tabular}
\hspace{-1.5em}
\left.\begin{array}{c}
\rule{0pt}{2em} \\ \rule{0pt}{2em} \\ \rule{0pt}{2.8em}
\end{array}\right\} k \text{ boxes}.
\end{equation}
For the antisymmetric representation, $k$ is bounded above by $N$, whereas the symmetric representations have no such restriction. We will be interested in the simultaneous limit of $N\rightarrow\infty$ and $k\rightarrow\infty$ with their ratio held fixed:
\begin{equation}\label{eq:f}
    f=\dfrac{k}{N}.
\end{equation}
This regime characterizes the giant Wilson loops.

Instead of dealing with individual representations it is convenient to define the generating functions:
\begin{align}\label{gen-functions-intro}
    F^{S}(t) = \sum_{k=0}^\infty t^k \Tr_{S_k} e^X,
    \qquad 
    F^{A}(t) = \sum_{k=0}^N t^k \Tr_{A_k} e^X.   
\end{align}
A Wilson loop in a given representation can then be extracted by a contour integral:
\begin{equation}
    W_{A_k,S_k}=\frac{1}{2\pi i}\oint \dfrac{\dd t}{t^{k+1}} \ev{F^{A,S}(t)}.
    \label{eq:Wilsonloop-contour-integral-intro}
\end{equation} 
The expectation values of the generating functions exponentiate at large $N$:
\begin{equation}   \label{eq:free-energy-definition}
 \ev{F^{A,S}(t)}=\,{\rm e}\,^{N  \mathcal{F}^{A,S}(t)},
\end{equation}
where the free energy $\mathcal{F}^{A,S}(t)$ admits a systematic 
large-$N$ expansion:
\begin{equation}
   \mathcal{F}^{A,S}(t)=\sum_{n=0}^\infty \mathcal{F}^{A,S}_n(t) N^{-n}.
\end{equation}
As a consequence of exponentiation, the contour integral in (\ref{eq:Wilsonloop-contour-integral-intro}) naturally picks a saddle point at $t\sim 1$ when $N$ is large and $k\sim N$. Taking $N\rightarrow \infty $ at fixed $t$ is therefore an equivalent definition of the giant Wilson loop regime.

We will develop a systematic framework to calculate $\mathcal{F}^{A,S}(t) $ order by order in $1/N$. As an illustration we explicitly evaluated the first two orders:
\begin{align}
    &\F^{A,S}_0(t) = \pm\frac{2}{\pi}\int\limits_{-1}^1\dd x\,\sqrt{1-x^2}\,
    \log\big(1\pm t \,{\rm e}\,^{\sqrt{\lambda}x}\big), 
    \\
    &\F_1^{A,S}(t) = \frac{\sqrt{\lambda}}{2\pi^2}\int\limits_{-1}^1\frac{\dd x}{\sqrt{1-x^2}}\int\limits_{-1}^1 \frac{y \, \dd y}{\sqrt{1-y^2}} \frac{\,{\rm e}\,^{\sqrt{\lambda}x}}{\,{\rm e}\,^{\sqrt{\lambda}y}-\,{\rm e}\,^{\sqrt{\lambda}x}}\log \frac{1\pm t \,{\rm e}\,^{\sqrt{\lambda}x}}{1\pm t\,{\rm e}\,^{\sqrt{\lambda}y}} \,,
\end{align}
where the upper and lower signs correspond to the antisymmetric and symmetric representations respectively. These formulas reproduce the well-known leading-order results for symmetric and antisymmetric Wilson loops \cite{Hartnoll:2006is} and the first correction for the antisymmetric representations \cite{Gordon:2017dvy,Okuyama:2017feo,CanazasGaray:2018cpk}. They are also consistent with the symmetry relation between symmetric and antisymmetric Wilson loops conjectured in \cite{Okuyama:2018aij}.

The generating functions above are readily projected onto the fixed-rank representations. The saddle-point evaluation of the Mellin integral in (\ref{eq:Wilsonloop-contour-integral-intro}) gives for the symmetric and antisymmetric Wilson loops at strong coupling ($\lambda \rightarrow \infty $) and to the first two orders in $1/N$:
\begin{align}
    &\log W_{A_k} = \frac{2N \sqrt{\lambda}}{3\pi}\sin^3\theta+\dfrac{\lambda}{8\pi^2}\left( \sin^2 \theta-\theta \sin2\theta+\theta^2 
    \right) +\ldots, \qquad 
    \theta - \cos \theta \sin \theta=\pi f
    \\
     &\log W_{S_k} = 2N\lb \kappa \sqrt{1+\kappa^2}+ {\rm arcsinh}\, \kappa \rb-\frac12 \log\Big(\kappa^3 \sqrt{1+\kappa^2}\Big) +\ldots,
     \qquad 
     \kappa =\frac{\sqrt{\lambda }\,f}{4}
\end{align}
The subleading corrections reproduce the earlier matrix-model calculations \cite{Gordon:2017dvy,Chen-Lin:2016kkk}\footnote{The result for the symmetric representations \cite{Chen-Lin:2016kkk} rests on an additional assumption that the symmetric Wilson loops are equivalent to $k$-fundamental ones at strong coupling \cite{Drukker:2005kx,Yamaguchi:2007ps}. This can be justified by a contour argument \cite{Fraser:2011qa,Chen-Lin:2015dfa}, which is water-tight at the leading order in $1/N$ but is a conjecture beyond that. Our calculation deals directly with the symmetric representations and does not rely on this assumption.} and therefore disagree with the analyses of quantum fluctuations of the D5 and D3 branes \cite{Buchbinder:2014nia,Faraggi:2014tna}.

\section{Wilson loops in higher representations}
\label{sec:wilsonloops}

Localization of the path integral maps the Wilson loop insertions to characters of the exponential field in the matrix model. For a given representation $R$, which we specify by its Young diagram, the character  $\Tr_R e^X$ is a gauge-invariant function of the matrix $X$ or, in  other words, is a symmetric function of its eigenvalues. The characters can be re-expressed through multi-traces,
for example:
\begin{equation}
    \Tr_{[2]} e^X  = \dfrac{1}{2}\left( \Tr e^{2X}+\left(\Tr e^{X} \right)^2\right).
\end{equation}
The character of an irreducible finite dimensional representation of $U(N)$ is a Schur polynomial of the eigenvalues, in our case:
\begin{equation}
    \Tr_{R} e^{X}=s_{R}\left(e^{x_1},\ldots, e^{x_N}\right).
\end{equation}
The characters of rank-$k$ symmetric and antisymmetric representations have a particularly simple form, they are the complete homogeneous symmetric polynomials $h_k$ and the elementary symmetric polynomials $e_k$ of the eigenvalues:
\begin{align}
    \Tr_{S_k} e^X &= h_k(e^{x_1},\dots,e^{x_N})
      = \sum_{1\leq i_1\leq\cdots\leq i_k\leq N} e^{x_{i_1}+\cdots+x_{i_k}}, \nonumber\\
    \Tr_{A_k} e^X &= e_k(e^{x_1},\dots,e^{x_N})
      = \sum_{1\leq i_1<\cdots<i_k\leq N} e^{x_{i_1}+\cdots+x_{i_k}}.
      \label{symmetricpolynomialDEF}
\end{align}
Our primary objective is to compute their expectation values in the Gaussian matrix model:
\begin{equation}
    W_{A_k,S_k}(N,\lambda)= \ev{\Tr_{A_k,S_k}e^{X}}.
\end{equation}

For a gauge-invariant observable like this, the angular degrees of freedom can be integrated out leaving behind an eigenvalue integral:
\begin{equation}
   \ev{f(x_1,\ldots x_N)} = \frac{1}{Z_0} \int\limits_{-\infty}^{{\infty}} \prod_{i=1}^{N} 
   \dd 
   x_i \,\Delta^2(x) \exp\left( -\dfrac{2N}{\lambda}\sum_{i=1}^N x_i^2 \right)f(x_1,\ldots x_N),
\end{equation}
where $\Delta(x)$ is the Vandermonde determinant.

The individual characters can be organized into generating functions (\ref{gen-functions-intro}), which take a particularly simple form for symmetric and antisymmetric representations:
\begin{align}
    F^{S}(t) &
      = \prod_{i=1}^N \frac{1}{1-te^{x_i}}
      = \frac{1}{\det(1-te^X)}\,, \label{generatingfunctionsBoth_sym}\\
    F^{A}(t) &
      = \prod_{i=1}^N (1+te^{x_i})
      = \det(1+te^X). \label{generatingfunctionsBoth_anti}
\end{align}
Our main focus will be on the expectation values $\left\langle F^{A,S}(t)\right\rangle$ in the large-$N$ limit. We note, however, that several interesting results for Wilson loops and their generating functions have been obtained at finite $N$  without assuming planarity
\cite{Fiol:2013hna,Okuyama:2018aij,Fiol:2018yuc,CanazasGaray:2019mgq,Morozov:2026rga}. Giant Wilson loops with local operator insertions have been also studied \cite{Giombi:2020amn}. 

There is subtlety in defining the generating function for the symmetric characters. One can see already at $N=1$ that the series:
\begin{equation}
    \ev{F^{S}(t)}\Big|_{N=1}=\sum_{k=0}^{\infty}t^k e^{\frac{k^2\lambda}{8}} 
\end{equation} 
has a vanishing radius of convergence. This can be understood from the right-hand side of \eqref{generatingfunctionsBoth_sym}. At finite $N$, the eigenvalues are integrated over the entire real axis, so for any positive real $t$ the integration will hit the pole at $x=-\log t$. At large $N$ the singularity is not as severe as the eigenvalues condense onto a cut. The cut in the spectral plane will induce a branch cut in the generating function in the $t$-plane.

The contour integral  \eqref{eq:Wilsonloop-contour-integral-intro} that projects the generating function onto individual representations becomes an inverse Laplace transform in the variable $\log t$:
\begin{equation}
    W_{A_k, S_k} = \oint \frac{\dd t}{2\pi it}\,\,
    e^{N\left(\F^{A,S}(t)-f\log t\right)}.
\end{equation}
In the large-$N$ limit with $f$ fixed the integral is naturally saturated by a saddle point.

\subsection{Exact determinant representation}
\label{sec:exact-determinant-representation}
The first key ingredient to our construction is the exact determinant representation for the expectation values of generating functions \eqref{generatingfunctionsBoth_sym} and \eqref{generatingfunctionsBoth_anti} at finite values of $N$. These representations were derived using the method of orthogonal polynomials \cite{Fiol:2013hna,Okuyama:2018aij}. Below, we discuss these in detail since they are crucial for our next step.

The expectation value of the antisymmetric generating function was computed
exactly in \cite{Fiol:2013hna}. In this case it takes the following form
\begin{equation}
    \langle F^A(t)\rangle
    = \frac{1}{Z_0}\int \prod_{i=1}^N \dd x_i\,\Delta^2(x)\,F^A(t)
      \exp \left(-\frac{2N}{\lambda}\sum_{j=1}^N x_j^2\right)
    = \det \Big(1+ M^A\Big), \label{exactgeneratingfunctionformulaA_k}
\end{equation}
where $M^A$ is an $N\times N$ matrix with elements
\begin{align}
    M^A_{ij}
    &=
    \int\limits_{-\infty}^{\infty}\dd x\,
    \varphi_i(x)\varphi_j(x)e^{-2Nx^2}\,
    t e^{\sqrt{\lambda}x} =
    t e^{\frac{\lambda}{8N}}
    \sqrt{\frac{i!}{j!}}
    \left(\sqrt{\frac{\lambda}{4N}}\right)^{j-i}
    L_i^{\,j-i}\!\left(-\frac{\lambda}{4N}\right),
    \label{MatrixelementLaguerre}
\end{align}
for $0\leq i,j \leq N-1$. We redefined the integration variable as $x\rightarrow x\sqrt{\lambda}$ for later convenience. 

Here $\varphi_i(x)$ are the Hermite polynomials that form the orthogonal polynomial basis of the Gaussian matrix model and are normalized with respect to the weight $e^{-2Nx^2}$:
\begin{equation}
    \int\limits_{-\infty}^\infty \dd x\,\varphi_i(x)\varphi_j(x)\,e^{-2Nx^2} = \delta_{ij}, \qquad
    \varphi_i(x) = \left(\frac{2N}{\pi}\right)^{1/4}\frac{1}{\sqrt{2^i i!}}\,H_i(\sqrt{2N}\,x).
\end{equation}
Absorbing the weight into the definition of the wavefunction: $\psi_n(x) = e^{-Nx^2}\varphi_n(x)$, we arrive at the eigenstates of the harmonic oscillator:
\begin{equation}
    -\frac{1}{4N^2}\frac{\dd^2\psi_n}{\dd x^2} + x^2\psi_n
    = \frac{1}{N} \left(n+\frac{1}{2}\right) \psi_n.
\end{equation}
Here $\hbar=1/N$ plays the role of the Planck constant, which suggests that the $1/N$ expansion should be equivalent to the WKB expansion of the quantum-mechanical matrix elements of the harmonic oscillator. This is essentially the idea we pursue in this paper.

If we denote by $\ket{\psi_n}$ the respective oscillator states, then $M_{ij}^{A}$ are matrix elements of the momentum-shift operator $te^{\sqrt{\lambda}\hat{x}}$:
\begin{equation}
    M_{ij}^{A}=\bra{\psi_i} te^{\sqrt{\lambda}\hat{x}} \ket{\psi_j}.
\end{equation}
Note, however, that the determinant \eqref{exactgeneratingfunctionformulaA_k} only selects the lowest $N$ eigenstates. The mapping to the harmonic oscillator described above was used for computing $1/N$ corrections in  \cite{CanazasGaray:2018cpk}, where the oscillator Hilbert space was truncated to the first $N$ levels to match the size of the matrix $M_{ij}^A$. We will develop this idea in an alternative direction, where instead of truncating the Hilbert space we will extend the matrix to an operator and represent the generating function as a  Fredholm determinant. A seemingly different but related Fredholm-determinant representation was derived in \cite{Giombi:2020amn}.

Following the same orthogonal-polynomial approach as in the antisymmetric representation, the generating function for the symmetric representation can be represented as
\begin{align}
    \langle F^S(t)\rangle
    &= \det_{0\leq i,j \leq N-1} \left( \int_{-\infty}^{\infty} \dd x\,\varphi_i(x)\varphi_j(x) \dfrac{1}{1-t e^{\sqrt{\lambda}x}}e^{-2Nx^2} \right).
\end{align}
Using the orthonormality of $\varphi_i(x)$, we can isolate the identity matrix and bring this expression to the same form as the antisymmetric generating function \eqref{exactgeneratingfunctionformulaA_k}
\begin{equation}
    \langle F^S(t)\rangle = \det \left(1+M^S\right), \label{determinantSymmetric}
\end{equation}
with the matrix $M^{S}_{ij}$, for $0\leq i,j \leq N-1$,
\begin{equation}
    M^S_{ij} = \int_{-\infty}^\infty \dd x\,\varphi_i(x)\varphi_j(x)\,e^{-2Nx^2}
               \frac{te^{\sqrt{\lambda}x}}{1-te^{\sqrt{\lambda}x}}.
               \label{SmatrixIntegralform}
\end{equation}
Unlike $M^A_{ij}$, which evaluates to a closed form in terms of generalized Laguerre polynomials \eqref{MatrixelementLaguerre}, the matrix elements $M^S_{ij}$ do not admit a simple closed-form expression. Expanding the integrand in powers of $t$ instead expresses the result as an infinite sum of $M^A_{ij}$ matrices. 

Let $\widetilde{M}^{A}_{ij}(n^2\lambda)$ denote the matrix \eqref{MatrixelementLaguerre} with the factor of $t$ removed:
\begin{equation}
    \widetilde{M}^A_{ij}(n^2\lambda) = \exp\left(\frac{n^2\lambda}{8N}\right) \sqrt{\frac{ i!}{ j!}} \left(\sqrt{\frac{n^2\lambda}{4N}}\right)^{j-i} L^{j-i}_{i}\left(-\frac{n^2\lambda}{4N}\right).
\end{equation}
The symmetric generating function can then be written as \cite{Okuyama:2018aij}
\begin{equation}
       \langle F^S(t)\rangle  = \det \lb 1+\sum_{n=1}^\infty t^n \widetilde{M}^A(n^2\lambda)\rb.
\end{equation}
This expanded representation was studied in \cite{Okuyama:2018aij}, where it was used to obtain many orders of the weak coupling expansion. One could in principle study the symmetric Wilson loop at finite $N$ using this form, but for our purposes, the unexpanded integral form \eqref{determinantSymmetric} is more useful.

\subsection{Symmetry relations}
\label{sec:symmetry-relations}
The expectation values of the symmetric and antisymmetric generating functions are related by a symmetry. Namely, for individual Wilson loops at finite $N$ one expects the following property to hold at least to all orders in perturbation theory in $\lambda$ \cite{Fiol:2018yuc,Muck:2019hnz}:
\begin{equation}
    W_{S_k}(N,\lambda)=(-1)^{k}W_{A_k}(-N,\lambda).
\end{equation}
At the level of generating functions this translates to:
\begin{equation}
    \langle F^S(t)\rangle =\langle F^A(-t)\rangle\Big|_{N\rightarrow-N}.
\end{equation}
Although the substitution  $N \rightarrow-N$ cannot be defined for the determinants themselves, it is well defined in the large-$N$ expansion
\begin{equation}\label{eq:StoAproperty}
    \mathcal{F}^{A}(t,N)=-\mathcal{F}^{S}(-t,-N).
\end{equation}
For the coefficients in the free energy expansion, this implies
\begin{equation}
    \F_n^A(t) =(-1)^{n+1}\F_n^S(-t).
\end{equation}
The symmetry holds at finite $N$ but it could in principle be violated in the large-$N$ limit. However, it was observed in \cite{Okuyama:2018aij} that \eqref{eq:StoAproperty} holds in the weak coupling and $1/N$ expansion. It was further established perturbatively in $\lambda$ in \cite{Fiol:2018yuc} and subsequently at finite $\lambda$ in \cite{CanazasGaray:2019mgq} to all orders in $1/N$. Our calculation confirms this symmetry for generic values of $\lambda$ at the leading and subleading order in $N^{-1}$.

The antisymmetric generating function has an additional symmetry, due to the symmetry of the Wilson loops with respect to taking conjugate representations: 
\begin{equation}
    \ev{ \Tr_R e^{X} }= \ev{ \Tr_{\bar{R}} e^{X} } e^{\frac{\lambda}{8N^2}
    \left(  |R|^2-|\bar{R}|^2\right)}
\end{equation}
where $|R|=\sum\limits_{i=1}^N R_{i}$.
For antisymmetric representations this implies:
\begin{equation}
    W_{A_k}(N,\lambda)=W_{A_{N-k}}(N,\lambda) e^{\frac{\lambda}{8N}(2k-N)}.
\end{equation}
The exponential factor is a feature in passing from $SU(N)$, for which the expectation values of conjugate representations are equal, to $U(N)$, where the additional $U(1)$ contribution must be taken into account. This symmetry is reflected in the large-$N$ limit, where $k \rightarrow N-k$ corresponds to transforming their ratio \eqref{eq:f} as $f \rightarrow1-f$.

\section{Wilson loops as Fredholm determinants}
\label{sec:fredholm}
In this section, we develop a new approach for computing symmetric and antisymmetric Wilson loops using the formalism of Fredholm determinants. As we will see, this formalism provides a systematic framework for computing higher order $1/N$ corrections.

Our approach is inspired by the Tracy-Widom analysis of the eigenvalue distribution in random matrices, which relates
eigenvalue statistics in Hermitian random matrix models to Fredholm
determinants with completely integrable kernels \cite{Tracy:1993ah,Tracy:1992rf,Tracy:1993xj,Tracy:1992mk,Meh2004}.
The method of differential equations \cite{Tracy:1993ah,1990IJMPB...4.1003I,Korepin:1993kvr}
then yields closed systems of differential equations for these determinants.  A number of observables in $\mathcal{N}=2$ and $\Nf$ SYM have been computed along these
lines \cite{Belitsky:2020qir,Beccaria:2022ypy,Bajnok:2024bqr,Bajnok:2024epf,Bajnok:2024ymr}.
Specific examples include the four-point function of heavy $\tfrac12$-BPS
operators \cite{Belitsky:2019fan}, flux-tube correlations
\cite{Belitsky:2019fan,Beisert:2006ez,Basso:2020xts} and non-planar
corrections to the free energy on $S^4$ \cite{Beccaria:2021vuc,Beccaria:2023kbl}.

More recently, related Fredholm determinant methods have been used to study the strong coupling and transseries structure for $\Nf$ SYM observables described by matrix Bessel kernels \cite{Bajnok:2025lji,Boldis:2026rkb,Bajnok:2026xri,Boldis:2026dcq}.

\subsection{Fredholm representation}
The result of the previous section can be unified by 
introducing the so-called \emph{symbol functions}%
\footnote{Also called cut-off functions in the literature.}: 
\begin{equation}
\label{eq:sym-antisym-symbol-functions}
    \chi^S(x,t) =  \frac{te^{\sqrt{\lambda}x}}{1-te^{\sqrt{\lambda}x}}, \qquad
    \chi^A(x,t) = te^{\sqrt{\lambda}x}\,.
\end{equation}
For a generic symbol function $\chi(x,t)$, the matrix $ M[\chi]$ is defined by the matrix elements of the operator $\chi(x,t)$ in the basis of the harmonic oscillator eigenfunctions:
\begin{equation}
    M[\chi]_{ij} = \int_{-\infty}^{\infty} \dd x\,\varphi_i(x)\varphi_j(x)\,e^{-2Nx^2}\,\chi(x,t)  = \bra{\psi_i}\chi(\hat{x},t)\ket{\psi_j
    }.
\end{equation}
In terms of these symbol functions, the exact determinant representations for generating functions \eqref{exactgeneratingfunctionformulaA_k} and \eqref{determinantSymmetric} can be written as
\begin{equation}
  \begin{split}
        \ev{F^{A,S}(t)}=\det\left(1+  M\left[\chi^{A,S}\right] \right),
  \end{split}
\end{equation}
where the only difference between the two representations is encoded in the choice of symbol function. Below we will drop the explicit $\chi$ dependence and simply denote the matrix $M[\chi]$ as $M$. Moreover, our analysis is agnostic to a precise shape of the symbol function and will only use its homogeneity properties.

The matrix $M$ appearing in the determinants can be thought of as the matrix representation of the quantum mechanical operator $\chi(\hat{x},t)$ in the basis of harmonic oscillator eigenstates. In addition, the operator is truncated to the first $N$ levels. It is this Hilbert space truncation that makes the large-$N$ limit tricky. The key idea is to reformulate the finite-$N$ truncation as an insertion of a projector
\begin{equation}
\hat{P}_N=\sum_{j=0}^{N-1} \ket{\psi_j}\bra{\psi_j}.
\end{equation}
The original matrix $M$ then becomes
\begin{equation}\label{eq:ProjectorIdentity1}
    \bra{\psi_i} \hat{P}_N\chi(\hat{x},t)\hat{P}_N \ket{\psi_j}= \left\{\begin{split}
        &M_{ij}\, \quad 0\leq i\leq N-1, \quad 0\leq j\leq N-1
        \\
        &0 \,\quad i\geq N \text{ or } j\geq N.
    \end{split} \right.
\end{equation}
Equivalently, directly in position space, the projector can be represented as an integral operator with a kernel given by:
\begin{equation}
\begin{split}
    H(x,y) &=\bra{x}\hat{P}_N \ket{y}=  e^{-Nx^2-Ny^2}\sum_{j=0}^{N-1}\varphi_j(x)\varphi_j(y)\\
    &= \frac{e^{-Nx^2-Ny^2}}{\sqrt{\pi}\,2^N(N-1)!}
      \frac{H_N(\sqrt{2N}x)H_{N-1}(\sqrt{2N}y)-H_{N-1}(\sqrt{2N}x)H_N(\sqrt{2N}y)}{x-y}.
      \label{christoffeldarbouxkernel}
    \end{split}
\end{equation}
Here the second equality follows from the well-known Christoffel-Darboux identity. The kernel $ H(x,y)$ explicitly depends on $N$, but for notational convenience we suppress this dependence throughout. Similarly, we suppress the $N$ dependence of the Hermite polynomials appearing above by defining\footnote{We note in passing that $\phi=\psi_N/\sqrt{2}$ and is therefore normalized to $1/\sqrt{2}$.\label{footnote4}}
\begin{equation}
    \phi(x) = \alpha_N e^{-Nx^2}H_N(\sqrt{2N}x), \qquad
    \psi(x) = \beta_N e^{-Nx^2}H_{N-1}(\sqrt{2N}x),
    \label{eq:phi-psi-definition}
\end{equation}
with $\alpha_N = (2N)^{-1/4}\Big(\sqrt{\pi}\,2^N(N-1)!\Big)^{-1/2}$ and
$\beta_N = (2N)^{1/4}\Big(\sqrt{\pi}\,2^N(N-1)!\Big)^{-1/2}$. 
In this condensed notation the kernel takes the form
\begin{equation}
    H(x,y) = \frac{\phi(x)\psi(y)-\psi(x)\phi(y)}{x-y}. \label{hermitkernel}
\end{equation}
Then \eqref{eq:ProjectorIdentity1} can be written in the position space  as
\begin{equation}
    M_{ij}=\int\limits_{-\infty}^{\infty}\dd x \int\limits_{-\infty}^{\infty}\dd y
 \,\psi_i(x) \psi_j(y) H(x,y) \chi(x,t), \quad 0\leq i,j \leq N-1.
 \end{equation}
 The kernel $H(x,y)$ is called the \emph{Hermite kernel}. Integral operators associated with kernels of the form \eqref{hermitkernel} are known as \emph{completely integrable}  \cite{1990IJMPB...4.1003I}.

This representation is much better suited for large-$N$ analysis. For example we could already here apply the Plancherel–Rotach asymptotic formulas to the Hermite polynomials appearing in \eqref{christoffeldarbouxkernel}. We will not use this, however, but rather take a different approach that will allow us to systematically account for subleading corrections.

To this end, we define quantum-mechanical operators
$\bm{H}$ and $\opchi$,
where $\bm{H}$ is an integral operator with the kernel $H(x,y)$:
\begin{equation}
    (\bm{H} f)(x) = \int\limits_{-\infty}^\infty \dd y H(x,y) f(y),
\end{equation}
and $\opchi$ acts by multiplication by the symbol $\chi(x,t)$. The generating functions can then be represented 
as Fredholm determinants:
\begin{equation}
    \langle F^{A,S}(t)\rangle = \det(1+\bm{H}\opchi^{A,S}), \label{fredholmantisym}
\end{equation}
which follows from the definition of the kernel $H(x,y)$ for which we have
\begin{equation}
    \Tr (\bm{H}\opchi)^n= \int\limits_{-\infty}^\infty \dd x_1\,\chi(x_1,t)\cdots
      \int\limits_{-\infty}^\infty \dd x_n\,\chi(x_n,t)\,
      H(x_1,x_2)\cdots H(x_n,x_1) = \Tr M^n.
\end{equation}

The generating function of symmetric/antisymmetric Wilson loops can be represented as a Fredholm determinant of a simpler operator 
at the cost of keeping random-matrix eigenvalues in the symbol, see (3.32), (3.33) in  \cite{Giombi:2020amn}. This construction makes contact to the AdS/CFT integrability via the Quantum Spectral Curve and separation of variables, and it would be interesting to establish such a connection in our case as well\footnote{It can be shown that the two representations, (\ref{fredholmantisym}) in this paper and (3.32), (3.33) in \cite{Giombi:2020amn}, are equivalent (Shota Komatsu, private communication).}.

\subsection{Method of differential equations}
\label{sec:diffeq}

The following approach was developed in \cite{Tracy:1993ah} and recently applied to various observables in supersymmetric theories \cite{Belitsky:2019fan,Belitsky:2020qrm,Bajnok:2024epf}. We follow the notation of the works on the octagon correlator \cite{Belitsky:2019fan,Belitsky:2020qrm}. This work dealt with the Bessel kernel rather than the Hermite kernel, and although the two cases are structurally similar, the analytic properties are quite different and technical steps of the calculation will soon diverge.

The main result of this section is that the Fredholm determinants \eqref{fredholmantisym} can be characterised by a system of integro-differential equations. Importantly, the $N$ dependence of the system of equations is much more amenable to asymptotic analysis. In particular, we will see that the large-$N$ expansion is nothing but the WKB approximation of the resulting equations.

The two representations differ only in the symbol function $\chi^{A,S}$. We first develop the construction for a generic symbol $\chi$ constrained by homogeneity. We will therefore suppress the representation labels $A,S$ throughout this section and restore them when we specialize to the antisymmetric and symmetric representations later on.

The main object of interest is the free energy $\F(t)$ defined in \eqref{eq:free-energy-definition}, which by \eqref{fredholmantisym} can now be expressed in terms of the Fredholm determinant
\begin{equation}
    \F(t) = \frac{1}{N}\log\det(1+\bm{H}\opchi).
    \label{eq:FreeEnergy-FredholmDet}
\end{equation}
Differentiating this expression with respect to $t$, we observe that it is natural to consider a resolvent-type operator:
\begin{equation}
    \bm R
    = (1+\bm H\opchi)^{-1}\bm H.
\end{equation}
Denoting its kernel by $R(x,y)$, the derivative of the free energy can be written as
\begin{equation}
    \partial_t\F(t) = \frac{1}{N}\int \dd x\,R(x,x)\,\partial_t\chi(x,t).
    \label{diffoflogdet}
\end{equation}
This resolvent operator resembles the Hermite kernel twisted by $(1+\bm H\opchi)^{-1}$ and it inherits many of its properties. To use this, we introduce two auxiliary functions \cite{Belitsky:2019fan,Belitsky:2020qrm}:
\begin{equation}
    Q(x,t) = \bra{x}\frac{1}{1+\bm{H}\opchi}\ket{\phi}, \qquad
    P(x,t) = \bra{x}\frac{1}{1+\bm{H}\opchi}\ket{\psi},
\end{equation}
which are the analogues of $\phi(x)$ and $\psi(x)$ defined in \eqref{eq:phi-psi-definition}, similarly twisted by $(1+\bm H\opchi)^{-1}$. The Hermite functions $\phi(x)$ and $\psi(x)$ satisfy a system of differential equations
\begin{equation}
    \phi'(x) = -2Nx\phi(x)+2N\psi(x), \qquad
    \psi'(x) = -2N\phi(x)+2Nx\psi(x), \label{phipsiderivative}
\end{equation}
which follow from the Hermite recurrence relations.

Analogous equations can be derived for the auxiliary functions $Q(x,t)$ and $P(x,t)$. For $Q(x,t)$, this requires computing
\begin{equation}
    \partial_x Q(x,t) = \bra{x}\partial_x(1+\bm{H}\opchi)^{-1}\ket{\phi}.
\end{equation} 
First, one uses an important property of the Hermite kernel
\begin{equation}
    \bra{x}[\partial_x,\bm{H}] \ket{y}=(\partial_x+\partial_y)H(x,y)
    = -2N\phi(x)\psi(y)-2N\psi(x)\phi(y). \label{eq:hermiteDerivativecommutator}
\end{equation}
In addition, to treat the action of the derivative on $\opchi$, we shall require that  symbol functions satisfy the identity
\begin{equation}
    \partial_x\chi(x,t) = \sqrt{\lambda}\,t\partial_t\chi(x,t).\label{symbolfunctionidentity}
\end{equation}
This means that $\chi(x,t)$ is really a function of
\begin{equation}
    z=t e^{\sqrt{\lambda} x},
\end{equation} 
which is indeed how both $\chi^{S}$ and $\chi^{A}$ \eqref{eq:sym-antisym-symbol-functions} depend on $x$ and $t$. A relation like \eqref{symbolfunctionidentity} is necessary to eventually close the system of equations. We could also generalize it to $ \partial_x\chi(x,t) = a(t)\partial_t\chi(x,t)$, which would only slightly modify the contents of this section. However, this would affect the subsequent logic considerably, so we stick to the specific case of $a(t)=\sqrt{\lambda}t$. Note that the expression \eqref{eq:hermiteDerivativecommutator} is different from the similar property of the Bessel kernel. 

Using these relations, the $x$-derivatives of $Q(x,t)$ and $P(x,t)$ can be expressed in terms of $Q$ and $P$. We go through the full derivation in Appendix \ref{app:diffeq}. The resulting equations are
\begin{align}
    &Q'(x,t) = 2N\big[-xQ(x,t)+P(x,t)+2P(x,t)u(t)\big]+\sqrt{\lambda}\,t\partial_t Q(x,t), \label{Qdiffeq}\\
    & P'(x,t) = 2N\big[xP(x,t)-Q(x,t)+2Q(x,t)w(t)\big]+\sqrt{\lambda}\,t\partial_t P(x,t). \label{Pdiffeq}
\end{align}
The equations involve extra functions that are commonly referred to as potentials:
\begin{equation}\label{eq:potentials}
\begin{split}
     u(t) = \bra{\phi}\opchi\frac{1}{1+\bm{H}\opchi}\ket{\phi}&,\quad  w(t) = \bra{\psi}\opchi\frac{1}{1+\bm{H}\opchi}\ket{\psi} \\
     v(t) = \bra{\phi}\opchi &\frac{1}{1+\bm{H}\opchi}\ket{\psi}.
\end{split}
\end{equation}
To close the system of equations, we differentiate the potentials with respect to $t$ and find:
\begin{equation}
 \begin{split}
        \partial_t u(t) &= \int\limits_{-\infty}^\infty \dd x\,Q^2(x,t)\,\partial_t\chi(x,t), \\ 
    \partial_t w(t)&= \int\limits_{-\infty}^\infty \dd x\,P^2(x,t)\,\partial_t\chi(x,t),
    \\
     \partial_t v(t) &= \int\limits_{-\infty}^\infty \dd x\,P(x,t)Q(x,t)\,\partial_t\chi(x,t).
 \end{split} \label{uwdiffeq}
\end{equation}

Two additional, and quite important identities can be derived by manipulating the operators at hand as done in Appendix \ref{app:diffeq}. First, the potential $v(t)$ is directly related to the free energy via
\begin{equation}
    v(t) = \frac{1}{4}\sqrt{\lambda}\,t\partial_t\F(t).\label{eq:vFrelation}
\end{equation}
Furthermore, one of the potentials can be expressed in terms of the other due to the relation:
\begin{equation}
    2u(t) w(t)  - u(t) + w(t) = -\frac{1}{2N}\sqrt{\lambda}\,t\partial_t v(t).
    \label{eq:uwvrelation}
\end{equation}
Since $v(t)$ is directly related to the free energy, we use \eqref{eq:uwvrelation} to eliminate $w(t)$ and write the equation for $P(x,t)$ \eqref{Pdiffeq} in terms of $u(t)$ and $v(t)$. It is also convenient to introduce the differential operator 
\begin{equation}
    \D = \partial_x - \sqrt{\lambda}\,t\partial_t.
\end{equation}
Then the equations \eqref{Qdiffeq}--\eqref{uwdiffeq} combine into the closed system:
\begin{align}
\begin{split}
    \frac{1}{2N}\D Q(x,t) &= -xQ(x,t)+(1+2u)P(x,t)\\
    \frac{1}{2N}\D P(x,t) &= xP(x,t)-\left( \dfrac{1+\frac{1}{N} \sqrt{\lambda}t \partial_t v}{1+2u(t)} \right)Q(x,t)\\
    \partial_t u(t)&= \int\limits_{-\infty}^\infty \dd x\,Q^2(x,t)\partial_t\chi(x,t)\\
    \partial_t v(t)&= \int\limits_{-\infty}^\infty \dd x P(x,t)Q(x,t)\,\partial_t\chi(x,t).
    \label{alldiffeqs}
\end{split}
\end{align}
The closed system of integro-differential equations \eqref{alldiffeqs} is the main result of this section and provides the starting point for the large-$N$ expansion of the free energy, which we consider in the next section. 

This derivation used only properties of the Hermite kernel and that $\chi(x,t)$ satisfies the homogeneity condition \eqref{symbolfunctionidentity}. Since both $\chi^A$ and $\chi^S$ satisfy this relation, the system of equations \eqref{alldiffeqs} holds for both the symmetric and antisymmetric representations, with the two cases distinguished by the choice of the symbol function.

If we assume that $\chi(x,t)$ vanishes as $t\rightarrow 0$, the definitions of the auxiliary potentials imply that $u,v \rightarrow0$ in this limit. The auxiliary functions then reduce to the Hermite functions $Q \rightarrow \phi,P\rightarrow \psi$. Accordingly, we see that the system \eqref{alldiffeqs} in this limit reduces to the system of equations for Hermite functions  \eqref{phipsiderivative}.

\section{WKB expansion}
\label{sec:wkb}
The explicit $N$ dependence of the first two equations for $Q(x,t)$ and $P(x,t)$ in \eqref{alldiffeqs} suggests that the large-$N$ expansion is equivalent to a WKB expansion. We will show that the other two equations for the potentials are compatible with the large-$N$ power counting. The system can therefore be solved order by order in $1/N$, allowing us to determine the free energy through \eqref{eq:vFrelation}. The starting point is the second-order equation for $Q(x,t)$ obtained by eliminating $P(x,t)$ from \eqref{alldiffeqs}. We start by treating the well-familiar case of the harmonic oscillator, to which our equations reduce once $t$ is set to zero.

\subsection{Warm-up: Hermite polynomials}
\label{sec:warmup}
As a starting point, we set $t=0$ which means all potentials vanish and the system \eqref{alldiffeqs} simplifies to
\begin{equation}
    \frac{1}{2N}Q'(x) = -xQ(x)+P(x), \qquad
    \frac{1}{2N}P'(x) = xP(x)-Q(x). \label{firstorderhermite}
\end{equation}
Eliminating $P$ yields the Schr\"{o}dinger equation for the $N$-th harmonic
oscillator level with mass $m=2$ and Planck constant $1/N$:
\begin{equation}
    \frac{1}{4N^2}Q''(x) - x^2Q(x) + \left(1+\frac{1}{2N}\right)Q(x) = 0.
    \label{eq:HermiteQWKB}
\end{equation}
At large $N$ the solution $Q$ assumes a WKB form:
\begin{equation}
    Q(x) = \exp \left(\int^x \dd x' L(x')\right)
\end{equation}
with:
\begin{equation}
    L(x)=N L_{-1}(x)+L_0(x)+\dfrac{1}{N}L_1(x)+\ldots.
\end{equation}
Substituting this ansatz into \eqref{eq:HermiteQWKB} gives a Riccati equation for $L(x)$
\begin{equation}
    L'(x)+L^2(x) = 4N^2 \left(x^2-1-\frac{1}{2N}\right), \label{eq:Ricattit=0}
\end{equation}
which can be solved order by order. After expanding this equation in $1\over N$ and solving for the coefficients, we find that the first two terms are
\begin{equation}
    L_{-1}^2(x)=4(x^2-1)
\end{equation}
\begin{equation}\label{eq:L0t0}
    L_0(x)= -\dfrac{L'_{-1}(x)}{2L_{-1}(x)} - \dfrac{1}{L_{-1}(x)}.
\end{equation}
The second term in the order $N^0$ contribution \eqref{eq:L0t0} 
appears because $Q$ is a highly excited state, with the level quantum number scaling as $n \sim \hbar^{-1}$. 
Higher order terms satisfy the recursion
\begin{equation}
    L_{n+1}(x) = -\dfrac{1}{2 L_{-1}(x)} \left( \sum_{k=0}^{n} L_{k}(x)L_{n-k}(x)+L_{n}'(x) \right) \, , \, n\geq 0.
\end{equation}

The leading-order solution requires taking the square root and consequently has two branches reflecting the existence of two independent solutions of the Schr\"{o}dinger equation. Only one of those is compatible with the normalizability condition: 
\begin{equation}\label{QQ-f}
        \QQ(x)=q_0 e^{S(x)},
\end{equation}
where $q_0$ is a normalization constant to be determined. This mandates us to take
\begin{equation}
    S(x)= -2N\int^x \dd x'\sqrt{x'^2-1} +\ldots 
\end{equation}
and hence choose
\begin{equation}\label{oscillator-1}
    L_{-1}(x)=-2\sqrt{x^2-1}.
\end{equation}
The square root here should be understood in the analytic sense, as a branch of the analytic function defined on the complex plane with a cut from $-1$ to $1$. The analytic square root behaves as $\sqrt{x^2-1}\stackrel{x\rightarrow \pm\infty}{\rightarrow}\pm x$, which guarantees the requisite fall-off at infinity.

The semiclassical wavefunction so constructed is analytic on the complex plane with a cut, but in general is not single-valued unless its phase rotates by an integer multiple of $2\pi$ upon 
encircling the cut:
\begin{equation}
    2N\oint \dd x\,\sqrt{x^2-1}=2\pi i\times{\rm integer}.
\end{equation}
This is exactly the Bohr-Sommerfeld quantization condition,  in our case satisfied by default because the eigenvalue of the Schr\"odinger equation was adjusted from the very beginning to lie at a quantized energy level of the oscillator.

The true solution of the Schr\"odinger equation is a real function of the real variable, while $\QQ(x)$ is complex-valued and is not even defined on the whole real line. A real-valued solution is obtained by taking the symmetric average of its boundary values above and below the cut:
\begin{equation}\label{analytic-c-prescription}
    Q(x)=  \frac{1}{2}\lim_{\epsilon \rightarrow 0}\left(\QQ(x+i\epsilon)+\QQ(x-i\epsilon)\right).
\end{equation}
The resulting wavefunction oscillates in the classically allowed region (the cut) and decays in the forbidden region, as expected. The 
complex-analytic form of the WKB expansion, originally due to Zwaan \cite{zwaan1929}, avoids a cumbersome local analysis near the turning points while automatically reproducing the correct Maslov indices, see \cite{landau1991quantum} for a pedagogical introduction.

The constant $q_0$ is fixed by the normalization condition. Indeed, at
$t = 0$ the function $Q(x)$ coincides with $\phi(x)$, a 
wavefunction of the harmonic oscillator normalized to $1/\sqrt{2}$, see footnote \ref{footnote4}. It is instructive to recover the normalization constant from the semiclassical wavefunction. Simple as it is, this calculation will illustrate the salient features of the subsequent analysis of the full problem.

To calculate the $\mathcal{O}(1)$ normalization we need to know the wavefunction with the same precision, namely to the second order in the WKB expansion. From (\ref{eq:L0t0}) and (\ref{oscillator-1}) we find for the subleading WKB correction:
\begin{equation}
    L_0(x)=-\frac{x}{2(x^2-1)}+\frac{1}{2\sqrt{x^2-1}}\,.
\end{equation}
The two terms appearing here are clearly of a different nature.
In the allowed zone, where the wavefunction oscillates, the first term remains real and therefore determines the genuine (local) normalization of the wavefunction, while the second term becomes imaginary and thus constitutes a correction to the phase. We distinguish such terms by calling them even and odd. Mathematically, the difference is in what happens upon crossing the cut:
\begin{align}\label{even-odd}
    &L(x)=L^{\rm even}(x)+L^{\rm odd}(x)
    \nonumber \\
    &L^{\rm even}(x+i\epsilon)=L^{\rm even}(x-i\epsilon)
     \nonumber \\
    &L^{\rm odd}(x+i\epsilon)=-L^{\rm odd}(x-i\epsilon).
\end{align}

One can notice that 
\begin{equation}
    L^{\rm even}_0=-\frac{1}{2}\,\,\frac{d}{dx}\,
    \ln L_{-1}^{\rm odd}.
\end{equation}
This is a reflection of a more general fact: taking the discontinuity of the Riccati equation (\ref{eq:Ricattit=0}) across the cut we get:
\begin{equation}
    \left(L^{\rm odd}\right)'+2L^{\rm even}L^{\rm odd}=0,
\end{equation}
and hence, at any order in the $1/N$ expansion,
\begin{equation}
    L^{\rm even}=-\frac{1}{2}\,\,\frac{d}{dx}\,
    \ln L^{\rm odd}.
    \label{eq:even-odd-t=0}
\end{equation}
This relation implies the following form of the WKB wavefunction:
\begin{equation}
    \QQ(x)=\frac{q_0}{\sqrt{L^{\rm odd}(x)}}\,
    \exp \left(\int^x \dd x' \, L^{\rm odd}(x')\right),   
\end{equation}
which neatly separates the phase from the modulus.

Explicitly, to the requisite accuracy:
\begin{equation}
    Q(x) = \frac{q_0}{\sqrt{2N}}\frac{1}{(1-x^2)^{1/4}}
    \cos \left(Np(x)-\dfrac{1}{2}\arccos x+\frac{\pi}{4}\right), \qquad
    p(x) = x\sqrt{1-x^2}-\arccos x. \label{eq:phiLargeN}
\end{equation}
This expression agrees with the Plancherel-Rotach asymptotics of the Hermite polynomials for $\phi(x)$ in the oscillatory region $|x|<1$. Outside that region the wavefunction rapidly (exponentially) decays. In computing the normalization we can thus restrict the integral to the allowed zone:
\begin{equation}
    \int \dd x\,Q^2(x)\simeq\frac{q_0^2}{2N}\int\limits_{-1}^{1}\frac{\dd x}{\sqrt{1-x^2}}\,\times\,\frac{1}{2}
    =\frac{q_0^2\pi}{4N}\,,
\end{equation}
where in the last step we replaced the rapidly oscillating $\cos^2$ by $1/2$, again with an exponential accuracy. This fixes the normalization constant to
\begin{equation}
    q_0 = \sqrt{\frac{2N}{\pi}}. \label{eq:normalization}
\end{equation}

\subsection{General \texorpdfstring{$t\neq 0$}{t ≠ 0} case}
\label{sec:wkbt}
We now consider the WKB expansion of the full system of differential equations \eqref{alldiffeqs} for non-zero $t$. For convenience, we define the following variables:
\begin{equation}
    \omega(t) = \frac{1}{2}\sqrt{\lambda}\,t\partial_t\log(1+2u(t)), \qquad
    \nu(t) = \sqrt{\lambda}\,t\partial_t v(t). \label{eq:omeganudef}
\end{equation}
The latter is directly related to the free energy:
\begin{equation}\label{F-from-nu}
    \nu(t)=\frac{\lambda}{4}\,
    \left(t\partial_t\right)^2\F(t),
\end{equation}
as follows from (\ref{eq:vFrelation}).

By eliminating $P(x,t)$ from the system \eqref{alldiffeqs}, we obtain a second-order equation for $Q(x,t)$, which can be compactly written using the notation \eqref{eq:omeganudef} as:
\begin{equation}
    \mathcal{D}^2Q(x,t)+2\omega(t) \mathcal{D}Q(x,t)+Q(x,t) \left[4N^2(1-x^2)+2N\big(1+2\nu(t)+2x\omega(t) \big) \right] = 0.\label{eq:secondorderQeq}
\end{equation}
We choose the WKB ansatz for $Q(x,t)$ in the form
\begin{equation}
    \QQ(x,t) = q_0\sqrt{1+2u(t)}\,e^{S(x,t)}.
\end{equation}
As before, we first determine a complex WKB solution defined on the complex plane with a cut along the allowed zone, and then construct the real WKB wavefunction by symmetrically shifting above and below the cut, as in 
(\ref{analytic-c-prescription}),
\begin{equation}
    Q(x,t)=\dfrac{1}{2}\lim_{\epsilon \rightarrow 0}\left( \QQ(x+i\epsilon,t)+\QQ(x-i\epsilon,t)\right).
\end{equation}
The factor $\sqrt{1+2u(t)}$
is taken out for convenience, it will allow us to write the potential equations \eqref{uwdiffeq} entirely in terms of $\omega(t)$ and $\nu(t)$.

The function $P(x,t)$ enters the first equation in  \eqref{uwdiffeq}
without derivatives and is determined algebraically in terms of $Q(x,t)$:
\newcommand{\PP}{\mathbb{P}}
\begin{equation}
      \PP(x,t) = q_0\frac{e^{S(x,t)}}{\sqrt{1+2u(t)}}
      \left(x+\frac{1}{2N}\big(\D S(x,t)-\omega(t)\big)\right). \label{eq:WKBansatz}
\end{equation}
 We expand the ``semiclassical action'' $S(x,t)$ as
\begin{equation}
   S(x,t)=  N S_{-{1}}(x,t)+\sum_{n=0}^{\infty} N^{-n}S_n(x,t)
\end{equation}
and define
\begin{equation}
    L(x,t)=\mathcal{D}S(x,t) = \sum_{n=-1}^{\infty} N^{-n} L_n(x,t).
\end{equation}
Rewriting the equation \eqref{eq:secondorderQeq} in terms of $L(x,t)$  produces a Riccati equation:
\begin{equation}
    \D L + L^2 + 4N^2 \left(1+\frac{1}{2N}-x^2\right)
    + 4N(\nu+x\omega) + \sqrt{\lambda}\,t\partial_t\omega - \omega^2 = 0.
    \label{eq:ricatti}
\end{equation}
Since the equation now depends on the potentials, in order to solve it order by order we have to assume the structure of their large-$N$ expansion. Therefore, let us assume an ansatz:
\begin{equation}
    \begin{split} &\omega(t)=\sum_{n=0}^{\infty}N^{-n} \omega_n(t), \quad \nu(t)=\sum_{n=0}^{\infty}N^{-n} \nu_n(t),
    \end{split}
    \label{eq:potential-expansion-ansatz}
\end{equation}
which we will show to be consistent with the potential equations. 

To justify the ansatz, let us look at the leading order first. From the Riccati equation we find that $L_{-1}(x,t)$ does not depend on $t$ and is given by:
\begin{equation}
    L_{-1}^2(x) = 4(x^2-1).
\end{equation}
The branch of the square root that corresponds to the normalizable solution is
\begin{equation}\label{ultimateL-1}
    L_{-1}(x)=-2\sqrt{x^2-1} ,
\end{equation}
hence:
\begin{equation}
    S_{-1}(x)=2\int\limits_x \dd x'\sqrt{x'^2-1} 
\end{equation}
Subsequent orders in the WKB expansion can be determined recursively from the Riccati equation
\eqref{eq:ricatti}:
\begin{equation}\label{first-step-recursion}
    2L_{-1}(x)L_0(x,t) + \D L_{-1}+2 
   + 4\big(x\omega_0(t)+\nu_0(t)\big)=0
\end{equation}
  \begin{align}
    L_{n+1}(x,t) =&- \frac{1}{2L_{-1}(x)}\Bigg( \D L_n(x,t)+ \sum_{k=0}^nL_k(x,t)L_{n-k}(x,t) + \sqrt{\lambda}t\partial_t \omega_n(t) 
    \nonumber 
    \\
    &+ 4\big(x\omega_{n+1}(t)+\nu_{n+1}(t)\big) - \sum_{k=0}^n\omega_k(t) \omega_{n-k}(t)\Bigg),\quad n\geq0.
    \label{eq:general-recursion}
\end{align}

The potentials are computed through the matrix elements of $Q$ and $P$. We use the same strategy in evaluating them as in computing the normalization for the semiclassical wavefunction in the warm-up example:  outside the allowed zone the wavefunction decays very fast and up to exponentially small corrections the integration domain can be restricted to the interval $(-1,1)$. There the wavefunction rapidly oscillates, and the oscillatory term $\cos^2(Np(x))$ can be replaced by its average of $1/2$, again with an exponential accuracy. At the end, the matrix elements are only sensitive to the slowly-varying overall amplitude.

The amplitude can be neatly separated from the phase by splitting the ``classical momentum'' $L(x,t)$ into the even and odd parts according to (\ref{even-odd}).
The even part can be integrated using the relation
\begin{equation}
    L^{\rm even}(x,t) 
    =-\frac{1}{2}\,\mathcal{D}\ln \Lo(x,t)
\end{equation}
which is a direct counterpart of \eqref{eq:even-odd-t=0}.

Within the large-$N$ expansion
the Riccati equation is solved algebraically for $L(x,t)$ order-by-order in $1/N$, but to reconstruct the action $S(x,t)$ we need to integrate the relation $\mathcal{D}S(x,t)=L(x,t)$, which now is a partial differential equation in two variables. We can integrate it along the characteristic $z=te^{\sqrt{\lambda}x}$:
\begin{equation}
      S(x,t)=  -{\int\limits_x^{A} \dd y L\left(y,te^{\sqrt{\lambda}(x-y)}\right)},
\end{equation}
which fixes the solution up to an arbitrary function of $z$. To avoid this functional ambiguity we can choose to integrate up to $A=+\infty$ where the boundary conditions on $S$ require normalizability of $\QQ$. The semiclassical wavefunction is then
\begin{equation}\label{eq:QR}   \begin{split}
     \QQ(x,t) = q_0\dfrac{\sqrt{1+2u(t)}}{\sqrt{\Lo(x,t)}} \exp\left(- \int\limits_{x}^A \dd y\Lo\left(y,te^{\sqrt{\lambda}(x-y)}\right)   \right)
\end{split}
\end{equation}
We would like the solution to be normalizable on the whole real axis. In particular, in the region $x>1$ the endpoint integral involves potentially growing contributions proportional to $te^{\sqrt{\lambda}x}$. As we will see at both leading and subleading order, the $t$-dependent terms in $L(x,t)$ have exponentially growing behavior in $x$ and hence decaying in $y \rightarrow \infty$ due to the factors $e^{-\sqrt{\lambda}y}$. The requisite decay at infinity is then guaranteed if we set $A\rightarrow \infty$. There is a caveat, however, related to the $t$-independent part of the momentum. The $t$-independent terms, for instance $L_{-1}(x)$ from (\ref{ultimateL-1}), cannot be integrated to $y\to \infty$. However, these terms will only contribute to the overall normalization of the potentials. The normalization can be fixed independently by matching at $t=0$. Therefore we can subtract these terms and work with:
\begin{equation}\label{eq:DeltaL}
    \hat L (x,t)  = L(x,t)-L(x,0).
\end{equation}
The remaining $t$-dependent integrand decays exponentially as $y\to\infty$ and we can safely take the upper limit $A \rightarrow \infty$. Below we will always assume that the integrand is the subtracted \eqref{eq:DeltaL}, after which we can set $A=\infty$. It was crucial for this argument that the free variable $z=te^{\sqrt{\lambda}x}$, the dependence on which is not fixed by the characteristic equation, grows at $x\rightarrow +\infty$. The boundary condition at $x\rightarrow-\infty$ comes out automatically satisfied because there $z$ decreases.

Integration along characteristics leaves a subtle imprint on the behavior of the wavefunction in the allowed region. Writing
\begin{equation}\label{eq:QR}   \begin{split}
     \QQ(x,t) = q_0\dfrac{\sqrt{1+2u(t)}}{\sqrt{\Lo(x,t)}} \exp\left(- \int\limits_{x}^1 \dd y\Lo\left(y,te^{\sqrt{\lambda}(x-y)}\right)-
     \int\limits_{1}^{+\infty} \dd y\hat{L}^{\rm odd}\left(y,te^{\sqrt{\lambda}(x-y)}\right)   \right),
\end{split}
\end{equation}
we see that the first term in the exponent becomes imaginary between the turning points, but the second term remains real.
This term, which is constant in the ordinary WKB, now gains a dependence on $x$ through its second argument. We introduce a special notation for this contribution:
\begin{equation}
    G(z)=2\int\limits_{1}^{+\infty} \dd y\hat{L}^{\rm odd}\left(y,ze^{-\sqrt{\lambda}y}\right) ,
    \qquad
    z=te^{\sqrt{\lambda}x}.
\end{equation}
The real wavefunction in the allowed region then has the form:
\begin{equation}
    Q(x,t)=q_0\dfrac{\sqrt{1+2u(t)}}{\sqrt{\left|\Lo(x,t)\right|}} \,e^{-\frac{G(z)}{2}}
     \cos(\ldots)
\end{equation}
We do not write the argument of the cosine explicitly, under an integral representing a matrix element $\cos^2$ can be always replaced by $1/2$ with an exponential accuracy. The normalization constant $q_0$ does not depend on $t$ and therefore is the same as in \eqref{eq:normalization}.

Once the wavefunctions $P$ and $Q$ are known, the system of equations (\ref{alldiffeqs}) boils down to two conditions that determine the potentials: 
\begin{align}
    \omega(t) &= \dfrac{\sqrt{\lambda} t N}{\pi}
    \int\limits_{-1}^1 \frac{\dd x}{\left|\Lo(x,t)\right|}
      \,e^{-G(z)}\partial_t\chi(x,t), \label{eq:omegaintegral}\\
    \nu(t) &= \dfrac{\sqrt{\lambda} t N}{\pi}\int\limits_{-1}^1 \frac{\dd x}{\left|\Lo(x,t)\right|}  \,e^{-G(z)}
      \left[x+\frac{1}{2N}\big(\Le(x,t)-\omega(t)\big)\right]\partial_t\chi(x,t),
      \label{eq:nuintegral}
\end{align}
where we took into account the redefinitions in (\ref{eq:omeganudef}). 
The factor of $N$ from the normalization compensates the leading-order behavior of $\Lo$, which starts with $\mathcal{O}(N)$. 
The potentials $\omega(t)$ and $\nu(t)$ are naturally of order one, in agreement with \eqref{eq:potential-expansion-ansatz}.

Explicit iterations of (\ref{first-step-recursion}), 
(\ref{eq:general-recursion}) give:
 \begin{equation}
         L_0^{\rm even}(x,t)= -\frac12 \frac{x}{x^2-1}, \quad L_0^{\rm odd}(x,t) =  \frac12 \frac{1}{\sqrt{x^2-1}}+ \frac{x\omega_0(t)+\nu_0(t)}{\sqrt{x^2-1}},
    \end{equation}
  \begin{equation}
        \begin{split}
               L_1^{\rm even}(x,t) &=-\frac14 \frac{x}{(x^2-1)^2} - \frac12\frac{x\big(x\omega_0(t)+\nu_0(t)\big)}{(x^2-1)^2}- \frac{\sqrt{\lambda}t}{4}\frac{x\omega'_0(t)+\nu'_0(t)}{x^2-1}+\frac14\frac{\omega_0(t)}{x^2-1}\\
               L_1^{\rm odd}(x,t) &= \frac{1}{16}\frac{4x^2+1}{(x^2-1)^{5/2}} +\frac14\frac{\big(x\omega_0(t)+\nu_0(t)\big)^2}{(x^2-1)^{3/2}}+\frac14 \frac{x\omega_0(t)+\nu_0(t)}{(x^2-1)^{3/2}}+\\
         &\frac{\sqrt{\lambda}t}{4}\frac{\omega'_0(t)}{\sqrt{x^2-1}}-\frac14 \frac{\omega^2_0(t)}{\sqrt{x^2-1}}+\frac{x\omega_1(t)+\nu_1(t)}{\sqrt{x^2-1}}.
         \label{eq:L1-recursion}
        \end{split}
    \end{equation}

\subsection{Leading order}
\label{sec:leadingorderwkb}
We will now compute the leading-order potentials by solving the integral equations at $\mathcal{O}(N^0)$. At this order, 
\begin{equation}
    \left|\Lo(x,t)\right|=2N\sqrt{1-x^2}\,,
\end{equation}
and
\begin{equation}
    G_0(z) = 2\int\limits_1^\infty \dd y\,
    \frac{y\,\omega_0(ze^{-\sqrt{\lambda}y})+\nu_0(ze^{-\sqrt{\lambda}y})}{\sqrt{y^2-1}},
    \label{eq:G0def}
\end{equation}
which depends on $z=t e^{\sqrt{\lambda}x}$. It will be useful to denote the Euler derivative of a function with respect to its argument by $\theta$
\begin{equation}
    \theta f(\zeta) = \zeta \partial_\zeta f(\zeta).
\end{equation}

The integral equations (\ref{eq:omegaintegral}),
(\ref{eq:nuintegral}) at the leading order become
\begin{equation}
    \omega_0(t) = \frac{\sqrt{\lambda}}{2\pi}\int\limits_{-1}^1\frac{\dd x}{\sqrt{1-x^2}}e^{-G_0(z)}\theta\chi(z), \qquad
    \nu_0(t) = \frac{\sqrt{\lambda}}{2\pi}\int\limits_{-1}^1\frac{x\,\dd x}{\sqrt{1-x^2}}e^{-G_0(z)}\theta \chi(z).\label{eq:leadingorder-potential}
\end{equation}
Here we write $\chi(x,t) = \chi(z)$ since \eqref{symbolfunctionidentity} implies that the symbol depends on $x$ and $t$ only through $z$, and thereby $t\partial_t\chi$ can be replaced by $z\partial_z\chi$.

These equations can be reformulated as a single equation on $G_0(z)$, by substituting \eqref{eq:leadingorder-potential} into the definition \eqref{eq:G0def}:
\begin{equation}\label{eq:G0integral}
  G_0(z) =  \frac{\sqrt{\lambda}}{\pi} \int_{1}^{\infty} \dd y \int_{-1}^{1} \dd s   \dfrac{s+y}{\sqrt{1-s^2}\sqrt{y^2-1}}  e^{-G_0(ze^{\sqrt{\lambda}(s-y)})}\theta \chi(ze^{\sqrt{\lambda}(s-y)})\equiv\frac{1}{\pi}\mathsf{T}[e^{-G_0}\theta\chi](z).
\end{equation}
The second equality defines a linear integral operator $\mathsf{T}$ through a fairly complicated double integral, which can be regarded as a generalized Abel transform. In addition, the argument is non-linear in the unknown function $G_0$  that we want to find. The problem looks hopelessly complicated. It is nonetheless possible to proceed because of a remarkable identity satisfied by the operator $\mathsf{T}$. In fact $\mathsf{T}$ is just $\theta^{-1}$ in disguise:
\begin{equation}
   z\dfrac{\dd}{\dd z} \mathsf{T}[F](z) = \pi F(z). \label{integraltransformidentity}
\end{equation}
We established this magic identity at first by experimentally checking it for a number of probe functions, but the general proof is not very difficult and is presented in the Appendix \ref{sectionIntegralTransform}. 

Differentiating both sides of \eqref{eq:G0integral} and using \eqref{integraltransformidentity} we obtain a simple differential equation for $G_0$:
\begin{equation}
    G_0'(z) = \chi'(z)\,e^{-G_0(z)}.
    \label{eq:G0-prime}
\end{equation}
The magic identity for the integral transform results in a differential equation for $G_0(z)$ that can be easily solved. Imposing the boundary condition $G_0(0)=0$ and assuming that $\chi(0) = 0$ (which is true for symbols of the symmetric and anti-symmetric generating functions), we find:
\begin{equation}
    G_0(z)=\log(1+\chi(z)). 
\end{equation}

With $G_0(z)$ at hand, we can proceed to evaluate $\nu_0(t)$ which is related to $\F_0(t)$ through (\ref{F-from-nu}):
\begin{equation}
    \nu_0(t) = \frac{\lambda}{4} (t\partial_t)^2\F_0(t) \label{eq:nu0F0relation}.
\end{equation}
In particular integrating the second equation in 
(\ref{eq:leadingorder-potential}) by parts 
and using $\partial_x f(z) = \sqrt{\lambda}z\partial_z f(z)$, we find: 
\begin{align}
        \nu_0(t) &= \frac{\lambda}{2\pi} (t\partial_t)^2 \int_{-1}^1 \dd x \sqrt{1-x^2} \log\big(1+\chi(z)\big).\label{eq:nu0simplified}
\end{align}
Comparing \eqref{eq:nu0F0relation} and \eqref{eq:nu0simplified}, we obtain the leading-order free energy in the form of a simple convolution integral:
\begin{equation}
    \F_0(t) = \frac{2}{\pi}\int_{-1}^1\dd x\,\sqrt{1-x^2}\,\log\big(1+\chi(z)\big).
    \label{eq:F0general}
\end{equation}

This result is quite general and is valid for any symbol function.
Specifically for the generating functions of antisymmetric and symmetric Wilson loops, the symbols are
\begin{equation}
    \chi^{A}(z)=z \,,\quad \chi^{S}(z) =\dfrac{z}{1-z},
\end{equation}
giving
\begin{align}
    &\F^A_0(t) = \frac{2}{\pi}\int_{-1}^1\dd x\,\sqrt{1-x^2}\,
    \log\left(1+te^{\sqrt{\lambda}x}\right), \label{F0anti}\\
    & \F^S_0(t) =- \frac{2}{\pi}\int_{-1}^1\dd x\,\sqrt{1-x^2}\,
    \log\left(1-te^{\sqrt{\lambda}x}\right). \label{F0sym}
\end{align}
These formulas reproduce the large-$N$ matrix-model 
result of \cite{Hartnoll:2006is}.

\subsection{Next-to-leading order}
We now consider the first subleading correction in the WKB expansion. It will be useful to define
\begin{align}\label{eq:notations}
    &n(z) = \theta G_0(z) = z\partial_z  \log(1+\chi(z)),\\
    &\Omega_i(x,t) = x\omega_i(t) + \nu_i(t),\\
    &\tau_y(z)= z e^{-\sqrt{\lambda}y},\\
    & H_i(z)
  =
  2\int\limits_1^\infty
\frac{\dd y}{\sqrt{y^2-1}}\,
\Omega_i \big(y,\tau_y(z)\big).
\label{eq:G1-definition}
\end{align}

To obtain potentials at the next order, we note that the measure in (\ref{eq:omegaintegral}), (\ref{eq:nuintegral}) receives two types of corrections: from $G_1$ in the exponential and from $\Lo_0$ in the pre-factor:
\begin{equation}
    e^{-G}=e^{-G_0}\left(1-\frac{1}{N}\,G_1+\ldots\right),
\end{equation}
and
\begin{equation}
    \frac{1}{|L^{\rm odd}|} = \frac{1}{N |L_{-1}^{\rm odd}|} 
  \left(
  1-\frac{1}{N}
  \frac{L_0^{\rm odd}}{L_{-1}^{\rm odd}}
  +\ldots
  \right)
  =\frac{1}{2N\sqrt{1-x^2} } 
  \left(
  1-\frac{
  1+2\Omega_0
  }{
  4N(1-x^2)
  }+\ldots
  \right)
  \label{eq:one-overLodd-expansion}
\end{equation}

The subleading correction $G_1$ is obtained by integrating the $t$-dependent part of $L^{\rm odd}_1$ from \eqref{eq:L1-recursion} (that is, the whole expression with the first term dropped).  The action naturally decomposes into a contribution involving the subleading potentials and a remainder built from the leading-order data already known:
\begin{equation}
    G_1(z) = H_1(z) + R_1(z),
    \label{eq:S1-equation}
\end{equation}
with the remainder given by
\begin{align}
  R_1(z)
  ={}&
 \frac12 \int\limits_1^\infty\dd y\,
\Bigg[
\frac{
\Omega_0 \big(y,\tau_y(z)\big)^2
+\Omega_0 \big(y,\tau_y(z)\big)
}{
(y^2-1)^{3/2}
}
+\frac{
\sqrt{\lambda}\,
\theta\omega_0 \big(\tau_y(z)\big)
-\omega_0 \big(\tau_y(z)\big)^2
}{
\sqrt{y^2-1}
}\Bigg].
  \label{eq:R1-before-finite-part}
  \end{align}
After some further simplification, the equations for the first subleading  potentials boil down to
\begin{align}
    &\omega_1(t) =- \frac{\sqrt{\lambda}}{2\pi} \int\limits_{-1}^1  \frac{\dd x}{\sqrt{1-x^2}}\,
  n \left(t e^{\sqrt{\lambda}x}\right)
  \Bigg[
  G_1 \left(t e^{\sqrt{\lambda}x}\right)
  +\frac{
  1+2\Omega_0(x,t)
  }{
  4(1-x^2)
  }
  \Bigg]\\
  &\nu_1(t) =- \frac{\sqrt{\lambda}}{2\pi}\int\limits_{-1}^1
  \frac{\dd x}{\sqrt{1-x^2}}\,
  n \left(t e^{\sqrt{\lambda}x}\right)
  \Bigg[
  xG_1 \left(t e^{\sqrt{\lambda}x}\right)
  +\frac{
  \omega_0(t)+x\nu_0(t)
  }{
  2(1-x^2)
  }
  \Bigg]
\end{align}
Inserting expressions for $\omega_1$ and $\nu_1$ directly into the definition of $H_1(z)$ \eqref{eq:G1-definition} gives
  \begin{equation}
  H_1(z)
  =-
  \frac{\sqrt{\lambda}}{\pi}
  \int\limits_1^\infty
  \frac{\dd y}{\sqrt{y^2-1}}\,
   \int\limits_{-1}^{1}
  \frac{\dd s}{\sqrt{1-s^2}}\,
  n \big(\tau_{y-s}(z)\big)
  (y+s)G_1 \big(\tau_{y-s}(z)\big)
  +R_2(z).
  \label{eq:G1-after-substitution}
  \end{equation}
As before we separate a genuine subleading contribution from the remainder inherited from the previous order:
\begin{align}
  R_2(z)
  &=-
  \frac{\sqrt{\lambda}}{\pi}
  \int\limits_1^\infty
  \frac{\dd y}{\sqrt{y^2-1}}\,
   \int\limits_{-1}^{1}
  \frac{\dd s}{\sqrt{1-s^2}}\,
  n \big(\tau_{y-s}(z)\big)
  \nonumber\\
  &\times
  \Bigg[
  \frac{y}{4(1-s^2)}
  +\frac{(y+s)
 \Omega_0\big(s,\tau_y(z)\big)}{2(1-s^2)}
  +\frac12\omega_0 \big(\tau_y(z)\big)
  \Bigg].
\label{eq:R2-before-moments}
  \end{align}
Quite remarkably, the remainder here is just the same as in 
(\ref{eq:S1-equation}), but with the opposite sign. This is not immediately obvious, as the integrals in (\ref{eq:R1-before-finite-part}),
(\ref{eq:R2-before-moments}) look quite different, but we can show
nonetheless that
\begin{equation}\label{eq:R1R2}
    R_1(z)+R_2(z) =0.
\end{equation}
The derivation can be found in the Appendix~ \ref{app:WKBdetails}.

As for the first term in $H_1$, it is precisely the integral transform defined in (\ref{eq:G0integral}). We thus obtain
  \begin{equation}
      H_1(z) =-\frac{1}{\pi} {\sf T}[n G_1](z) +R_2(z).
      \label{eq:G_1(z)}
  \end{equation}
Differentiating both sides and using the magic identity (\ref{integraltransformidentity})
we get:
\begin{equation}\label{eq:G1DE}
    H_1'(z) = -G_0'(z) \Big[H_1(z)+R_1(z)\Big]+R_2'(z).
\end{equation}
Using (\ref{eq:R1R2}) this can be rewritten as
\begin{equation}
    H_1'+G_0'H_1=-R_1'-G_0'R_1.
\end{equation}
This equation can be readily integrated. The most obvious solution 
\begin{equation}
    H_1=-R_1
\end{equation}
is the right one, given
the boundary conditions $H_1(0)=R_1(0)=0$\footnote{Since $R_1=-R_2$ and the defining integrand of $R_2(z)$ contains $n(\tau_{y-s}(z))$ which vanishes at $z=0$, we conclude that $R_1(0)=0$.}. This, in turn, implies that
\begin{equation}
    G_1 = 0.
\end{equation}

The first subleading potentials therefore reduce to 
  \begin{align}
  \omega_1(t)
  &=
  -\frac{\sqrt{\lambda}}{8\pi}\,
  \int\limits_{-1}^{1}
  \frac{\dd x}{(1-x^2)^{3/2}}\,
  n \left(t e^{\sqrt{\lambda}x}\right)
  \Big[
  1+2x \omega_0(t)+2\nu_0(t)
  \Big],
  \label{eq:omega1-local-start}
  \\
  \nu_1(t)
  &=
  -\frac{\sqrt{\lambda}}{4\pi}\,
  \int\limits_{-1}^{1}
  \frac{\dd x}{(1-x^2)^{3/2}}\,
  n \left(t e^{\sqrt{\lambda}x}\right)
  \Big[
  \omega_0(t)+x\nu_0(t)
  \Big],
  \label{eq:nu1-local-start}
  \end{align}
and are thereby expressed through the known leading-order data.

The integrals here and actually in many expressions encountered in this section are formally divergent, as $(1-x^2)^{-{\nu \over 2}}$ integrated between $-1$ and $1$ or $(y^2-1)^{-{\nu \over 2}}$ integrated from $1$ to infinity. Such integrals should be understood in the analytic sense. Indeed, the wavefunction was originally defined in the complex plane and if all the $\epsilon$'s were kept throughout the calculation the integrals would run along the contours surrounding the cut. Such integrals are well defined. A similar phenomenon  arises in the textbook perturbation theory of quantum mechanics.
  
The integrals can be readily evaluated with this contour-integral prescription,  since $\omega_0$ and $\nu_0$ depend only on $t$, and the integrals with $n \left(t e^{\sqrt{\lambda}x}\right)$ reduce to the leading-order potentials, which are collected in the identities \eqref{eq:fp-moment-zero} and \eqref{eq:fp-moment-one}. Putting everything together we get:
\begin{align}
      &\omega_1(t)
  =
  \frac{\sqrt{\lambda}}{4}\,
t\partial_t
  \left[
  \omega_0(t)^2+\nu_0(t)^2+\nu_0(t)
  \right],
  \label{eq:omega1-final}\\
  &\nu_1(t)=\frac{\sqrt{\lambda}}{2}\,
 t\partial_t
  \left[
  \omega_0(t)\nu_0(t)
  \right].
  \label{eq:nu1-final}
\end{align}

Using \eqref{eq:vFrelation}, we obtain the subleading correction to the free energy $\F_1(t)$ from $\nu_1(t)$:
\begin{align}
\begin{split}
    t\partial_t \F_1(t) &= \frac{2}{\sqrt{\lambda}}\omega_0(t)\nu_0(t) \\
    &= \frac{\sqrt{\lambda}}{2\pi^2}\int_{-1}^1 \frac{\dd x}{\sqrt{1-x^2}}\int_{-1}^1 \frac{{s\,\dd s}}{\sqrt{1-s^2}}\,n\left(te^{{\sqrt{\lambda}x}}\right) n\left(te^{\sqrt{\lambda}s}\right).
\label{eq:tpartialtF1}
\end{split}
\end{align}
This way of presenting the result is particularly simple and it applies to a general symbol function satisfying $\mathcal{D}\chi=0$. The differential relation \eqref{eq:tpartialtF1} can be integrated producing:
\begin{equation}
    \mathcal{F}_1(t)= \dfrac{1}{8\pi^2}\int_{-1}^{1} \dd x\int_{-1}^{1} \dd s  \left[\dfrac{ \log\left(1+\chi\left(t e^{\sl x}\right) \right)-\log\left(1+\chi\left(t e^{\sl s}\right) \right) }{s-x} \right]^2 \dfrac{1- xs}{\sqrt{1-s^2} \sqrt{1-x^2}}.
    \label{eq:general-F1}
\end{equation}
This formula has been observed at weak coupling \cite{Okuyama:2017feo} for the antisymmetric case $\chi(z) = z$. Here we prove directly that it holds generically for a symbol function $\chi(z)$ and at arbitrary coupling. 

As was already noticed in \cite{Okuyama:2017feo}, the kernel in this formula is the connected  density-density correlator in the Gaussian matrix model
\cite{ambjorn1990multiloop}\footnote{The density-density correlator is actually universal, independent of the potential in the matrix model \cite{ambjorn1990multiloop,Makeenko:1991tb}}. The result can thus be concisely written as 
\begin{equation}
    \mathcal{F}_1(t)=\left\langle
       \ln\det\left(1+\chi\left(t e^{\sl X}\right)\right)
        \ln\det\left(1+\chi\left(t e^{\sl X}\right)\right)
    \right\rangle_{\rm conn}
\end{equation}
This is the most general form of the result. Simpler formulas can be obtained for the symmetric and antisymmetric cases, which can be compared with the existing literature.

For the antisymmetric Wilson loop we have
\begin{equation}
    \chi^A(z) = z, \quad G_0^A(z) = \log(1+z), \quad n^A(z) = \frac{z}{1+z}.
\end{equation}
Integrating \eqref{eq:tpartialtF1} and imposing $\F^A_1(0) = 0$ gives
\begin{align}
    \F_1^A(t)&=  \frac{\sqrt{\lambda}}{2\pi^2}\int_{-1}^1 \frac{\dd x}{\sqrt{1-x^2}}\int_{-1}^1 \frac{{s\, \dd s}}{\sqrt{1-s^2}} \int_0^t \dd \tau \frac{
    \tau e^{\sqrt{\lambda}(x+s)}
    }{
    \left(1+\tau e^{\sqrt{\lambda}x}\right)
    \left(1+\tau e^{\sqrt{\lambda}s}\right)}.
\end{align}
The $\tau$-integral can be evaluated and after some simplifications we find
\begin{equation}
    \F_1^A(t) = -\frac{\sqrt{\lambda}}{2\pi^2}\int_{-1}^1 \frac{\dd x}{\sqrt{1-x^2}}\int_{-1}^1 \frac{{s\, \dd s}}{\sqrt{1-s^2}} \frac{e^{\sqrt{\lambda}x}}{e^{\sqrt{\lambda}s}-e^{\sqrt{\lambda}x}}\log \left(\frac{1+t e^{\sqrt{\lambda}s}}{1+te^{\sqrt{\lambda}x}} \right).
    \label{eq_final-antisymmetric-F1}
\end{equation}
To compare with the previous work \cite{CanazasGaray:2018cpk}, we must account for a difference in conventions. In particular, their free energy is related to ours by $\widetilde{\F}(t) = \F^A(te^{-\lambda/8N})$\footnote{They define $\widetilde{\F}(t)= \frac1N \log \det (1+tI(g,g))$ where $M^A(t) = t e^{\lambda/8N}I(g,g)$.
}, which gives 
\begin{equation}
    \widetilde{\F}_1(t)
    =\F_1^A(t)-\frac{\lambda}{8}\,
    t\partial_t\F_0^A(t).
\end{equation}
Including this shift, our result agrees with
\cite{CanazasGaray:2018cpk}. Moreover, after some rearrangements, \eqref{eq_final-antisymmetric-F1} also agrees with the result of \cite{Gordon:2017dvy}.

The subleading correction to the free energy is consistent with the symmetry:
\begin{equation}
    n^{S}(z)=-n^A(-z).
\end{equation}
Therefore we have
\begin{equation}
    \omega_0^S(t)=-\omega_0^A(-t), \quad \nu_0^S(t)=-\nu_0^A(-t),
\end{equation}
which gives a relation between the free energies
\begin{align}\label{eq:F1fullintegral}
    \F_1^S(t) &= \F_1^A(-t)\nn \\
    &=  -\frac{\sqrt{\lambda}}{2\pi^2}\int_{-1}^1 \frac{\dd x}{\sqrt{1-x^2}}\int_{-1}^1 \frac{{s\, \dd s}}{\sqrt{1-s^2}}\frac{e^{\sqrt{\lambda}x}}{e^{\sqrt{\lambda}s}-e^{\sqrt{\lambda}x}}\log \left(\frac{1-t e^{\sqrt{\lambda}s}}{1-te^{\sqrt{\lambda}x}} \right)
\end{align}
As we discussed previously, this relation was conjectured on account of the weak coupling expansion \cite{Okuyama:2018aij} and studied further in \cite{Fiol:2018yuc,CanazasGaray:2019mgq}. Our calculation independently confirms  this relation for finite $\lambda$ and $t$ to the first two orders in $1/N$.

\section{Saddle-point analysis}
Having determined the generating functions through the first subleading order in $1/N$ for both symmetric and antisymmetric representations, the next step is to compute the Wilson loops at large $N$ and large $k$, with $f=k/N$ fixed, by analyzing the saddle point in the $t$ contour integral. Moreover, we will consider the strong coupling limit, where matrix model results can be directly compared with the D-brane description in the dual theory.

Recall that the $k$-th Wilson loop is obtained from
\begin{equation}
    W_{A_k,S_k} = \oint_C \frac{\dd t}{2\pi i t} \exp\Big( N \big[\F^{A,S}(t)-f\log t\big]\Big).\label{eq:Wilsonloop-extraction}
\end{equation}
Following \cite{Hartnoll:2006is}, it is natural to change variables $t=e^{\sqrt{\lambda}y}$, which maps the complex $t$-plane to a cylinder. In terms of these variables, Eq. \eqref{eq:Wilsonloop-extraction} becomes
\begin{equation}\label{eq:tintegrallargeN}
    W_{A_k,S_k} = \frac{\sqrt{\lambda}}{2\pi i}\oint_C \dd y \exp\Big(N\Phi_0^{A,S}(y) + \F_1^{A,S}(e^{\sqrt{\lambda}y}) + \cO(N^{-1}) \Big),
\end{equation}
where
\begin{equation}
    \Phi_0^{A,S}(y) = \F_0^{A,S}(e^{\sqrt{\lambda}y}) - f\sqrt{\lambda}y.
\end{equation}
The integral is evaluated via an expansion around a saddle point, which is determined by the $\Phi_0^{A,S}$ term:
\begin{equation}
    \frac{\dd \Phi^{A,S}_0}{\dd y} (y_*) = 0.
    \label{eq:GeneralSaddlepoint}
\end{equation}
Substituting the expression $\mathcal{F}^{A,S}_0(t)$ for the symmetric and antisymmetric representations \eqref{F0anti}--\eqref{F0sym} into the saddle point condition we get the equation:
\begin{equation}
 \pm \dfrac{2 \sqrt{\lambda}}{\pi}\int_{-1}^1  \dd x \dfrac{\sqrt{1-x^2}}{1 \pm e^{-\sqrt{\lambda}(x+y_*)}}    -f \sqrt{\lambda} =0, 
 \label{eq:General-explicit-saddlepoint-eq}
\end{equation}
where the upper sign corresponds to the antisymmetric representation and the lower sign to the symmetric representation. It is well-known that the analysis of this equation differs significantly depending on the sign. In fact this is already true for the integral \eqref{eq:tintegrallargeN} itself. The leading-order free energies $\mathcal{F}_0^{A}$ and $\mathcal{F}_0^{S}$ have different analytic structures in the complex $y$-plane. From their explicit integral expressions we see that the function $\mathcal{F}_0^{A}$ is regular on the real $y$-axis, while  $\mathcal{F}_0^{S}$ develops a branch cut at real $y$ because of the vanishing denominator in the logarithm. The same applies to the subleading corrections. Therefore the two cases should be treated separately.

\subsection{Antisymmetric representation}
Let us briefly recall how the computation works for the antisymmetric representation at leading order. The large-$k$ and strong coupling analysis at subleading order was computed thoroughly in \cite{CanazasGaray:2018cpk,Gordon:2017dvy}. Since our finite-coupling integral representation of $\F_1^A$ agrees with these results, we do not repeat the corresponding subleading saddle-point analysis here.

For the antisymmetric representation, we take the $+$ sign in \eqref{eq:General-explicit-saddlepoint-eq}. 
At strong coupling, the factor in the integrand \eqref{eq:General-explicit-saddlepoint-eq} becomes
\begin{equation}
    \frac{1}{1+e^{-\sqrt{\lambda}(x+y_*)}} \to \Theta(x+y_*).
\end{equation}
It is convenient to parametrize the saddle point by
\begin{equation}
    y_*=-\cos \theta_*, \quad 0<\theta_*<\pi.
\end{equation}
Then the saddle-point equation becomes
\begin{align}
   f&= \frac{2}{\pi}\int_{\cos \theta_*}^1 \dd x\sqrt{1-x^2} = \frac{1}{\pi}\left( \theta_* - \cos \theta_* \sin \theta_*\right).
\end{align}
Evaluating the leading action at this point gives
\begin{equation}
    \Phi_0^A(y_*) = \frac{2\sqrt{\lambda}}{3\pi}\sin^3\theta_*.
\end{equation}
The leading contribution to the contour integral \eqref{eq:Wilsonloop-extraction} for large $k\sim N$ is therefore
\begin{equation}
    W_{A_k} = \exp\lb \frac{2N \sqrt{\lambda}}{3\pi}\sin^3\theta_* \rb.
\end{equation}
The subleading term was previously evaluated  \cite{Gordon:2017dvy,CanazasGaray:2018cpk} from the same generating function $\mathcal{F}_1^A(t)$. We simply quote the result below:
\begin{equation}
    W_{A_k} = \exp\lb \frac{2N \sqrt{\lambda}}{3\pi}\sin^3\theta_*+\dfrac{\lambda}{8\pi^2}\left( \sin^2 \theta_*-\theta_* \sin2\theta_* +\theta_*^2 \right) \rb.
\end{equation}
This result is valid in the strict large-$\lambda$ approximation, corrections in $1/\sqrt{\lambda}$ were investigated in \cite{Horikoshi:2016hds}.

\subsection{Symmetric representation}

The symmetric representation has a qualitatively different analytic structure, which requires some care. The leading-order contribution was originally computed in \cite{Hartnoll:2006is}. We first reproduce that computation and then follow the same approach for the subleading order. 

The saddle-point equation can be written as (after changing $x$ to $-x$ as the integration variable):
\begin{equation}\label{eq:symmetricsaddle1}
\dfrac{2 }{\pi}\int_{-1}^1  \dd x \dfrac{\sqrt{1-x^2}}{e^{\sqrt{\lambda}(x-y_*)}-1} -f  =0.
\end{equation}
Its solution depends on the coupling. At weak coupling the saddle point exists at some value $y_{*} <-1$. Increasing $\lambda$ one can observe that the critical point moves towards $y_*=-1$ and reaches this endpoint at some critical coupling $\lambda_c$. At larger values of the coupling the solution does not exist.

One way to understand this behavior is to interpret the integral in \eqref{eq:symmetricsaddle1} as the particle number of a non-interacting Bose gas with chemical potential $y_*+1$ and a single-particle density of states given by the Wigner semicircle. The whole distribution is shifted to start at the lower endpoint of the Wigner semicircle $x=-1$. The filling is determined by $f$, so we are looking for a value of the chemical potential that would realize this filling $f$.

At sufficiently large temperatures $T \sim 1/\sqrt{\lambda}$ and fillings of order $f< T$, this is always possible. As $T$ decreases, the Bose distribution becomes concentrated along the $x=-1$ edge of the semicircle, causing the chemical potential to increase to maintain the same filling until it eventually reaches zero at $y_*=-1$. The critical filling can be computed at weak coupling by expanding the exponential at $y_*=-1$:
\begin{equation}
    \dfrac{2}{\pi}\int_{-1}^1\dd x \dfrac{\sqrt{1-x^2}}{x+1} \sim \sqrt{\lambda}f_c \quad \Rightarrow \quad f_c  \sim \dfrac{2}{\sqrt{\lambda}}.
\end{equation}

The Wilson loop itself, however, does not have a singularity at this point. There are two ways to deal with this issue at the level of the saddle-point computation. The first is to reevaluate the contribution of different eigenvalues. One can show that for the symmetric representation, the Bose distribution exponentially enhances the contribution of the largest eigenvalue, which may lie outside of the Wigner semicircle. The $t$-contour can then be deformed to pick out this contribution.

Here we instead follow the proposal of \cite{Hartnoll:2006is}, which is to recognize that the function
\begin{equation}
    \F_0^S(e^{\sqrt{\lambda}y}) = -\int_{-1}^1 \dd x \rho(x){\rm Log}\Big(1-e^{\sqrt{\lambda}(x+y)}\Big), \quad \rho(x) =  \frac{2}{\pi}\sqrt{1-x^2}
\end{equation}
has a branch cut along $[-1,1]$ in the $y$ plane. Therefore to solve for the saddle point carefully, one has to analytically continue this function to the second sheet. Then, at large coupling, one finds that the saddle-point equation does have a solution that lies on the second sheet of its Riemann surface. These two approaches can be shown to be equivalent at least to leading order in large $N$ \cite{Chen-Lin:2015dfa}. 

Since we will extend the analytic approach beyond leading order, we start by reviewing the computation of \cite{Hartnoll:2006is}. Note that the initial contour in \eqref{eq:Wilsonloop-extraction} in the $y$ plane is the interval $[c-\frac{i\pi}{\sqrt{\lambda}},c+\frac{i\pi}{\sqrt{\lambda}}]$ located at $\Re \, y=c<-1$. One can push the contour towards $\Re \, y \rightarrow \infty$, where the action tends to $\Phi_0(y)\to -\infty$ and the contribution of the integral vanishes. In crossing the cut one picks up a contour integral around it. 

For a function with a cut along this interval, we denote its boundary values above and below the cut by
\begin{equation}
    X_\pm(y) =  \lim_{\epsilon\to0^+}X(y\pm i\epsilon), \quad -1<y<1.
\end{equation}
Then the integral \eqref{eq:Wilsonloop-extraction} can be rewritten in terms of the discontinuity on the cut:
\begin{equation}
    W_{S_k} = -\frac{\sqrt{\lambda}}{\pi}\,\Im \int_{-1}^{1} \dd y \, e^{N\Phi_{0,-}^S(y)+\F_{1,-}^S(e^{\sqrt{\lambda}y})}.
\end{equation}
First we calculate the function below the cut. Approaching the cut from below or above dictates the branch of the logarithm: 
\begin{equation}
    {\rm Log}\Big(1-e^{\sqrt{\lambda}(x+y-i0)}\Big) = \log|1-e^{\sqrt{\lambda}(x+y)}| + i\pi \Theta(x+y).
\end{equation}
Then the leading saddle-point action is 
\begin{equation}
    \Phi^{S}_{0,\pm}(y) = -\int_{-1}^1 \dd x \rho(x) \log|1-e^{\sqrt{\lambda}(x+y)}| \pm 2i \delta(y)-4\kappa y,
\end{equation}
where
\begin{equation}
    \delta(y) = \int_{-1}^y \dd x \sqrt{1-x^2} = \frac12 \left( y\sqrt{1-y^2}+ \arcsin y + \frac{\pi}{2}\right)
\end{equation}
and
\begin{equation}
    \kappa = \frac{k\sqrt{\lambda}}{4N}.
\end{equation}
The discontinuity across the cut is therefore
\begin{equation}
    \Phi_{0,+}^{S}(y) -\Phi_{0,-}^{S}(y) = 4i \delta(y).
    \label{eq:disc-across-cut}
\end{equation}
Now we can continue this function across the cut. Denote the two Riemann sheets by $\rm I$ and $\rm II$. The two sheets are glued across the cut by requiring
\begin{equation}
    \Phi_{0,+}^{S,\mathrm{II}}(y)
    =
    \Phi_{0,-}^{ S,\mathrm I}(y).
    \label{eq:symmetric-sheet-matching}
\end{equation}
Together with \eqref{eq:disc-across-cut}, this defines the analytic continuation away from the cut as
\begin{equation}
    \Phi_{0}^{ S,\mathrm{II}}(y)
    =
    \Phi_0^{ S,\mathrm I}(y)-4i\delta(y).
    \label{eq:symmetric-second-sheet-action}
\end{equation}
As discussed above, at sufficiently weak coupling, there is a saddle point on the real $y$-axis with $y <-1$, i.e. at $\lambda<\lambda_c$ we have a solution to the equation:
\begin{equation}
    \frac{\dd \Phi_0^{ S,\mathrm{I}}}{\dd y}(y_*) = 0.
\end{equation}
For $\lambda>\lambda_c$ we switch to the second sheet and look for:
\begin{equation}
    \frac{\dd \Phi_0^{ S,\mathrm{II}}(y_*)}{\dd y} = 0.
\end{equation}
Explicitly, this equation reads
\begin{equation}
    \sqrt{\lambda}\int_{-1}^1 \dd x \rho(x) \frac{e^{\sqrt{\lambda}(x+y_*)}}{1-e^{\sqrt{\lambda}(x+y_*)}} - 4\kappa - 4i\sqrt{1-y_*^2}=0.
    \label{eq:symmetric-finite-coupling-saddle}
\end{equation}

We now consider the strong coupling limit with $\kappa$ fixed. For $y_*<-1$, the integral in \eqref{eq:symmetric-finite-coupling-saddle} is exponentially suppressed in the strong coupling limit. Solving \eqref{eq:symmetric-finite-coupling-saddle} we find
\begin{equation}
    y_* = -\sqrt{1+\kappa^2} 
\end{equation}
up to exponentially suppressed terms. Evaluating \eqref{eq:symmetric-second-sheet-action} at the saddle point gives the leading large-$N$ contribution to the symmetric Wilson loop
\begin{equation}
    W_{S_k} = \exp\left[2N\lb \kappa \sqrt{1+\kappa^2}+ {\rm arcsinh}\, \kappa \rb\right].
\end{equation}

\paragraph{Subleading order.} 

We need to evaluate the subleading $\mathcal{F}_1^S(e^{\sqrt{\lambda}y})$ correction at the saddle point. The function has the same branch cut as the leading order. Since the saddle point is located on the second sheet we need to analytically continue the function $\mathcal{F}_1^S(e^{\sqrt{\lambda}y})$.

To do this, recall that we found that the subleading free energy can be expressed in terms of leading order potentials \eqref{eq:tpartialtF1}. The continuation $ \frac{\dd}{\dd y} \F_1^{S, \rm II}$ to the second sheet should satisfy
\begin{equation}
    \frac{\dd}{\dd y} \F^{S, \rm II}_1(e^{\sqrt{\lambda}y})= 2\omega^{S,\rm II}_0(e^{\sqrt{\lambda}y})\nu^{S,\rm II}_0(e^{\sqrt{\lambda}y}).
\end{equation}
The analytic continuation of the leading order potentials to the second sheet is
\begin{align}
    &\omega_0^{S, \rm II}(e^{\sqrt{\lambda}y}) = \omega_0^{S, \rm I}(e^{\sqrt{\lambda}y}) - \frac{1}{\sqrt{y^2-1}}\\
    &\nu_0^{S, \rm II}(e^{\sqrt{\lambda}y}) = \nu_0^{S, \rm I}(e^{\sqrt{\lambda}y}) + \frac{y}{\sqrt{y^2-1}}.
\end{align}
For fixed $y<-1$, the first-sheet potentials are exponentially suppressed at strong coupling
\begin{equation}
    \omega_0^{S, \rm I}(e^{\sqrt{\lambda}y}) = \frac{\sqrt{\lambda}}{2\pi}\int_{-1}^1 \frac{\dd x}{\sqrt{1-x^2}}\frac{e^{\sqrt{\lambda}(x+y)}}{1-e^{\sqrt{\lambda}(x+y)}}
\end{equation}
because $x+y\leq1+y<0$ throughout the integration region. The integrand for $\nu_0^{S, \rm I}$ differs only by an additional factor of $x$ which does not change the asymptotic behaviour. Neglecting these exponentially suppressed terms, we obtain
\begin{equation}
    \frac{\dd}{\dd y} \F_1^{S, \rm II}(e^{\sqrt{\lambda}y}) = - \frac{2y}{y^2-1}.
\end{equation}
Integrating with respect to $y$ we obtain up to exponentially suppressed terms
\begin{equation}\label{eq:withc(g)}
    \F_1^{S, \rm II}(e^{\sqrt{\lambda}y}) = -\log(y^2-1)+c(\lambda).
\end{equation}
We have described the analytic continuation in this simplified form, using the differential relation to leading-order potentials. We could have also examined the full integral \eqref{eq:F1fullintegral}, which would then fix the constant as well:
\begin{equation}
    c(\lambda)=-\log \sqrt{\lambda}+{\rm const}.
\end{equation}
For example, this could be obtained by inspecting the behaviour of \eqref{eq:F1fullintegral} around $y=-1$ and comparing with \eqref{eq:withc(g)}.
Using $y_*^2-1 = \kappa^2$, the subleading free energy at the saddle point is
\begin{equation}\label{eq: F1atsaddle}
    \F_1^{S, \rm II}(e^{\sqrt{\lambda}y_*}) = -2\log \kappa -\log \sqrt{\lambda}+{\rm const}+ \cO(e^{-\sqrt{\lambda}}).
\end{equation}

The subleading correction to $W_{S_k}$ also receives contributions from one-loop fluctuations around the saddle point and from the measure of the contour integral. The one-loop fluctuations around the saddle point give
\begin{equation}
    -\dfrac{1}{2}\log\left(-\dfrac{\dd^2 \Phi_0^{\rm II}}{\dd y^2}(y_*)\right) = \dfrac{1}{2}\log\left( \dfrac{\kappa}{4\sqrt{\kappa^2+1}} \right).
\end{equation}
In addition, the integral measure gives a contribution
\begin{equation}
    \dfrac{\dd t}{t}=e^{\log \sqrt{\lambda}}\dd y,
\end{equation}
which cancels the $\lambda$-dependence in \eqref{eq: F1atsaddle}.
Collecting all contributions gives
\begin{equation}
    W_{S_k}= \exp\left( N\Phi_0^{\rm II}(y_*)+\F_1^{S,\rm II}(e^{\sqrt{\lambda}y_*}) -\dfrac{1}{2}\log\left(-\dfrac{\dd^2 \Phi_0^{\rm II}}{\dd y^2}(y_*)\right)+\log \sqrt{\lambda}+{\rm const}\right).
\end{equation}
Evaluating the terms we obtain 
\begin{equation}
\begin{split}
    \log W_{S_k} &= 2N\lb \kappa \sqrt{1+\kappa^2}+ {\rm arcsinh}\, \kappa \rb-\frac12 \log\Big(\kappa^3 \sqrt{1+\kappa^2}\Big) + \operatorname{const}+ \cO(e^{-\sqrt{\lambda}}).
\end{split}
\end{equation}
This agrees with the $\kappa$-dependence in \cite{Chen-Lin:2016kkk}\footnote{In~\cite{Chen-Lin:2016kkk}, the Wilson loop is defined as
$\log W_{S_k}=-N(\mathcal F_0+N^{-1}\mathcal F_1+\cdots)$, giving the difference in sign.}

\section{Conclusions}

In this paper we considered the subleading order correction to the large-$N$, large-$k$ asymptotic expansion of Wilson loops in the rank-$k$ symmetric and antisymmetric representations of $U(N)$ in $\mathcal{N}=4$ SYM theory, with $f=\frac kN$ held fixed. There are persistent discrepancies between the supergravity calculations for the giant Wilson loops and the matrix model analysis. 
With this calculation we hope to nail down the matrix model side.

Our approach is based on a Fredholm determinant representation of the generating functions of the Wilson loops. Its main advantage is that the antisymmetric and symmetric cases can be treated in the same way, with the distinction between them encoded in the symbol function. This approach, together with the Tracy--Widom method of differential equations, allowed us to systematically compute subleading corrections in $1/N$ using the conventional WKB expansion. In particular, we obtained the leading and first subleading contributions to the free energies at finite coupling for both representations, as well as a formula for the first subleading term for a general symbol function \eqref{eq:general-F1}.

In the antisymmetric case, our results are in perfect agreement with previous calculations \cite{Gordon:2017dvy,CanazasGaray:2018cpk}. In the symmetric case, our result confirms that at strong coupling and fixed $\kappa=\frac{\sqrt{\lambda}f}{4}$, the subleading contribution to the rank-$k$ symmetric Wilson loop is the same as for the $k$-fold wound Wilson loop in the fundamental representation putting earlier result of \cite{Chen-Lin:2016kkk} on a firmer ground. 

All in all we conclude that at subleading order the gauge-theory calculations do not match the current predictions obtained from holography either for the D3-brane or for the D5-brane. There could be several sources for the mismatch from the holographic side such as gravitational backreaction of the probe branes or issues with the renormalization of one-loop determinants for D-brane fluctuations. 

Curiously, the mismatch is rather different in the two cases. For the D5-brane holographic dual of the antisymmetric Wilson loop, the subleading contribution from gauge theory scales as $\cO(N^0\lambda)$ at strong coupling. On the other hand, the semiclassical expansion of the DBI action around the classical D5-brane solution \cite{Yamaguchi:2006tq} is organized in powers of the D5-brane tension. The subleading term would therefore be expected to scale as $\cO(N^0\lambda^0)$ \cite{Faraggi:2011ge}. For the symmetric representation, the scaling of the subleading term from gauge theory agrees with that predicted by the semiclassical expansion around a classical D3-brane solution \cite{Drukker:2005kx}. The mismatch seems to instead come from the one-loop determinant contribution \cite{Buchbinder:2014nia, Chen-Lin:2016kkk}. It would be interesting to understand what is the nature of the mismatch, which we believe should be seeked on the holographic side of the story.

Our results place giant Wilson loops within the general framework of Fredholm determinants with integrable kernels, which has proved to be useful for several observables in $\mathcal{N}=2$ and $\mathcal{N}=4$ SYM. For the Bessel kernel, the choice of symbol function corresponds to different observables \cite{Bajnok:2024epf,Bajnok:2024bqr}. Since our calculation works for a generic symbol function $\chi\left(te^{\sqrt{\lambda}x}\right)$, it would be interesting to see if it can be extended to other physically interesting observables that admit a similar determinant representation. 

The closed system of integro-differential equations \eqref{alldiffeqs} also provides a starting point for computing higher-order corrections in $1/N$. Here we solved it through the first subleading order using the WKB expansion. A natural next step is to use the next WKB term from the recursion \eqref{eq:general-recursion} to determine $\F^{A,S}_2(t)$ and the corresponding giant Wilson loop corrections. In particular, it would be interesting to see if the massive cancellation we observed at the first subleading order persists at higher orders in $1/N$.
~\\

\noindent\textbf{Acknowledgements}\par\smallskip
\noindent
We would like to thank G. Korchemsky and A. Tseytlin for enlightening discussions. The authors used GPT-5.6 and GPT-6 Astra for assistance with some of the derivations and to check the paper for typos and errors. In particular it was used to guess the simplifications in  Appendix \ref{app:WKBdetails} and to bring the integral transformation \ref{lem:integraltransform} into its most general form. In cases where AI was used the derivations were repeated by the authors and they
take full responsibility for the contents of the paper.

\appendix
\section{Differential equations}
\label{app:diffeq}
We derive the differential equations satisfied by $Q(x,t)$ and $P(x,t)$ and the potentials $u(t), w(t)$ and $v(t)$. For notational simplicity, we suppress the $t$-dependence. Recall
\begin{equation}
    Q(x)=\bra{x}\frac{1}{1+\bm{H}\opchi}\ket{\phi}, 
    \qquad 
    P(x)=\bra{x}\frac{1}{1+\bm{H}\opchi}\ket{\psi}.
\end{equation}

Differentiating $Q(x)$ and using the resolvent commutator identity gives
\begin{equation}\label{eq:Qprime}
    Q'(x)=\bra{x}\frac{1}{1+\bm{H}\opchi}\partial_x\ket{\phi}
    +\bra{x}\Big[\partial_x,\frac{1}{1+\bm{H}\opchi}\Big]\ket{\phi}.
\end{equation}
We treat the two terms in \eqref{eq:Qprime} separately. For the first term, using $\phi'(x)=-2Nx\phi(x)+2N\psi(x)$ and the identity
\begin{equation}
    \Big[\boldsymbol{x},\frac{1}{1+\bm{H}\opchi}\Big]
    =-\frac{1}{1+\bm{H}\opchi}[\boldsymbol{x},\bm{H}]\opchi
    \frac{1}{1+\bm{H}\opchi},
\end{equation}
together with $[\boldsymbol{x},\bm{H}]=\ket{\phi}\bra{\psi}-\ket{\psi}\bra{\phi}$, we obtain
\begin{equation}
    \bra{x}\frac{1}{1+\bm{H}\opchi}\partial_x\ket{\phi}
    =2N\big(-xQ(x)+P(x)-Q(x)v+P(x)u\big).
\end{equation}

For the second term we use
\begin{equation}
    \Big[\partial_x,\frac{1}{1+\bm{H}\opchi}\Big]
    =-\frac{1}{1+\bm{H}\opchi}[\partial_x,\bm{H}]
    \opchi\frac{1}{1+\bm{H}\opchi}
    -\frac{1}{1+\bm{H}\opchi}\bm{H}(\partial_x\opchi)
    \frac{1}{1+\bm{H}\opchi}.
\end{equation}
The expression for the kernel of  $[\partial_x,\bm{H}]$ in \eqref{eq:hermiteDerivativecommutator} can be written in operator notation as 
\begin{equation}
      [\partial_x,\bm{H}] = -2N \ket{\phi}\bra{\psi}- 2N \ket{\psi} \bra{\phi}.
\end{equation}
Evaluating the matrix elements gives
\begin{equation}\label{eq:appendixA1}
    \bra{x}\Big[\partial_x,\frac{1}{1+\bm{H}\opchi}\Big]\ket{\phi}
    =2N\big(Q(x)v+P(x)u\big)
    -\bra{x}\frac{1}{1+\bm{H}\opchi}\bm{H}(\partial_x\opchi)
    \frac{1}{1+\bm{H}\opchi}\ket{\phi}.
\end{equation}
Here the potentials \eqref{eq:potentials} have appeared. In order to simplify the second term in \eqref{eq:appendixA1} we have to require that the $x$-derivative of the symbol function can be traded for the $t$-derivative.
Using the symbol function identity \eqref{symbolfunctionidentity},
\begin{equation}
    \partial_x\chi(x,t)=\sqrt{\lambda}\,t\,\partial_t\chi(x,t),
\end{equation}
the remaining term becomes
\begin{equation}
    \bra{x}\frac{1}{1+\bm{H}\opchi}\bm{H}(\partial_x\opchi)
    \frac{1}{1+\bm{H}\opchi}\ket{\phi}
    =-\sqrt{\lambda}\,t\,\partial_t Q(x).
\end{equation}
Collecting all terms together we obtain
\begin{equation}
    Q'(x)=2N\big[-xQ(x)+P(x)+2P(x)u\big]
    +\sqrt{\lambda}\,t\,\partial_t Q(x).
    \label{QdiffeqDerivation}
\end{equation}

For $P(x)$, we proceed analogously
\begin{equation}
    P'(x)=\bra{x}\frac{1}{1+\bm{H}\opchi}\partial_x\ket{\psi}
    +\bra{x}\Big[\partial_x,\frac{1}{1+\bm{H}\opchi}\Big]\ket{\psi}.
\end{equation}
Using $\psi'(x)=-2N\phi(x)+2Nx\psi(x)$ and the commutator relations above gives
\begin{equation}
    P'(x)=2N\big[xP(x)-Q(x)+2Q(x)w\big]
    +\sqrt{\lambda}\,t\,\partial_t P(x).
    \label{PdiffeqDerivation}
\end{equation}

Differentiating $u=\bra{\phi}\opchi(1+\bm{H}\opchi)^{-1}\ket{\phi}$ and using the resolvent identity yields
\begin{align}
    \partial_t u
    &=\bra{\phi}\frac{1}{1+\opchi\bm{H}}
      (\partial_t\opchi)
      \frac{1}{1+\bm{H}\opchi}\ket{\phi} \\
    &=\int_{-\infty}^{\infty}\dd x\,Q^2(x)\partial_t\chi(x,t).
    \label{udiffeq}
\end{align}
The same computation gives
\begin{align}
    &\partial_t w
    =\int_{-\infty}^{\infty}\dd x\,P^2(x)\partial_t\chi(x,t)
    \label{sdiffeq}\\
        &\partial_t v
    =\int_{-\infty}^{\infty}\dd x\,Q(x)P(x)\partial_t\chi(x,t).
    \label{vdiffeq}
\end{align}

Next we derive the two relations \eqref{eq:vFrelation} and \eqref{eq:uwvrelation}. The function $R(x,y)\chi(y)$ is the kernel of the operator
\begin{equation}
  R(x,y)\chi(y) = \bra{x}(1+\bm{H}\opchi)^{-1}\bm{H}\opchi\ket{y} = \bra{x}1- (1+\bm{H}\opchi)^{-1}\ket{y}.
\end{equation}
Using the resolvent identity and the relations \eqref{eq:hermiteDerivativecommutator}--\eqref{symbolfunctionidentity}, we find
\begin{equation}
    (\partial_x+\partial_y)\big(R(x,y)\chi(y)\big)
    = -2NQ(x)P(y)\chi(y)-2NP(x)Q(y)\chi(y)
      +\sqrt{\lambda}\,t\partial_t\big(R(x,y)\chi(y)\big).
\end{equation}
Taking $x\to y$ and using $\D\chi = 0$ gives
\begin{equation}
    (\partial_x-\sqrt{\lambda}\,t\partial_t)R(x,x) = -4NQ(x)P(x).
    \label{eq:Rxeq}
\end{equation}
Multiplying \eqref{eq:Rxeq} by $\partial_t\chi(x)$ and integrating over the real line, the right-hand side gives $-4N\partial_t v$ via \eqref{vdiffeq}. On the left-hand side, using integration by parts and $\D\chi = 0$ gives
\begin{equation}
    \partial_t \left(-4v+\sqrt{\lambda}\,t\partial_t\F(t)\right) = 0.
\end{equation}
Since $v(t=0)=0$, this integrates to \eqref{eq:vFrelation}.

For the second relation, differentiate $2uw-u+w$ with respect to $t$.
Using \eqref{alldiffeqs},
\begin{equation}
    \partial_t(2uw-u+w)
    = \int\dd x\,\big[(2w-1)Q^2+(2u+1)P^2\big]\,\partial_t\chi
    = \frac{1}{2N}\int\dd x\,\D\big(QP\big)\,\partial_t\chi.
\end{equation}
Applying the same integration-by-parts as above gives \eqref{eq:uwvrelation}.

\section{Subleading WKB details}
\label{app:WKBdetails}
In this appendix we show that $R_1(z)= -R_2(z)$. Recall that we use the notations \eqref{eq:notations}. The function $R_1(z)$ was defined in \eqref{eq:R1-before-finite-part} and $R_2(z)$ was defined in \eqref{eq:R2-before-moments}.

We first simplify both expressions. Starting with $R_2(z)$, we can evaluate the $s$-integral along the cut using the fact that the leading-order definitions of the two potentials imply
  \begin{align}
  \int_{-1}^{1}
  \frac{n \left(\tau_{y-s}(z)\right)}
  {\sqrt{1-s^2}}\dd s
  &=
  \frac{2\pi}{\sqrt{\lambda}}\,
  \omega_0 \left(\tau_{y}(z)\right),
  \label{eq:cut-moment-zero}\\
  \int_{-1}^{1}
  \frac{s\,n \left(\tau_{y-s}(z)\right)}
  {\sqrt{1-s^2}}\dd s
  &=
  \frac{2\pi}{\sqrt{\lambda}}\,
  \nu_0 \left(\tau_{y}(z)\right).
  \label{eq:cut-moment-one}
  \end{align}
  Moreover, 
  \begin{equation}
  \partial_s
  n \left(\tau_{y-s}(z)\right)
  =
  \sqrt{\lambda}\,
  \theta n \left(\tau_{y-s}(z)\right),
  \end{equation}
  where $\theta f(\zeta)=\zeta\partial_\zeta f(\zeta)$.
  Using the cut finite-part identity
  \begin{equation}
   \int_{-1}^{1}
  \frac{f(s)}{(1-s^2)^{3/2}}\dd s
  =
  -\int_{-1}^{1}
  \frac{s f'(s)}{\sqrt{1-s^2}}\dd s,
  \end{equation}
  we obtain
  \begin{align}
   \int_{-1}^{1}
  \frac{
  n\left(\tau_{y-s}(z)\right)}
  {(1-s^2)^{3/2}}\dd s
  &=
  -2\pi\,
 \theta\nu_0 \left(\tau_{y}(z)\right),
  \label{eq:fp-moment-zero}
  \\
   \int_{-1}^{1}
  \frac{
  s\,n \left(\tau_{y-s}(z)\right)}
  {(1-s^2)^{3/2}}\dd s
  &=
  -2\pi\,
  \theta\omega_0 \left(\tau_{y}(z)\right),
  \label{eq:fp-moment-one}
  \\
   \int_{-1}^{1}
  \frac{
  s^2\,n\left(\tau_{y-s}(z)\right)}
  {(1-s^2)^{3/2}}\dd s
  &=
  -2\pi\,
  \theta\nu_0 \left(\tau_{y}(z)\right)
  -\frac{2\pi}{\sqrt{\lambda}}\,
  \omega_0 \left(\tau_{y}(z)\right).
  \label{eq:fp-moment-two}
  \end{align}
  Substituting
  \eqref{eq:cut-moment-zero}--\eqref{eq:fp-moment-two} into
  \eqref{eq:R2-before-moments} gives
  \begin{align}
  R_2(z)
  ={}&
  \sqrt{\lambda}
  \int_1^\infty
  \frac{\dd y}{\sqrt{y^2-1}}
  \Bigg\{
  \left[
  y\left(
  \frac12+
  \nu_0 \left(\tau_{y}(z)\right)
  \right)
  +
  \omega_0 \left(\tau_{y}(z)\right)
  \right]
   \theta\nu_0 \left(\tau_{y}(z)\right)
  \nonumber\\
  &\hspace{34mm}
  +
  \left[
  y\,\omega_0 \left(\tau_{y}(z)\right)
  +
  \nu_0 \left(\tau_{y}(z)\right)
  \right]
   \theta\omega_0 \left(\tau_{y}(z)\right)
  \Bigg\}.
  \label{eq:R2-final}
  \end{align}

Next we simplify the $R_1(z)$ integral by considering the identity
\begin{equation}
   \int_1^\infty
  \frac{f(y)}{(y^2-1)^{3/2}}\dd y
  =
  \int_1^\infty
  \frac{y f'(y)}{\sqrt{y^2-1}}\dd y
  \label{eq:exterior-finite-part-identity}
  \end{equation}
  and 
  \begin{align}
  \frac{\dd}{\dd y}
  \omega_0 \left(\tau_{y}(z)\right)
  &=
  -\sqrt{\lambda}\,
  \theta\omega_0 \left(\tau_{y}(z)\right),
  \\
  \frac{\dd}{\dd y}
  \nu_0 \left(\tau_{y}(z)\right)
  &=
  -\sqrt{\lambda}\,
  \theta\nu_0 \left(\tau_{y}(z)\right).
  \end{align}
  Using \eqref{eq:exterior-finite-part-identity} in
  \eqref{eq:R1-before-finite-part} gives
  \begin{align}
  R_1(z)
  ={}&
  \frac12
  \int_1^\infty
  \frac{\dd y}{\sqrt{y^2-1}}
  \Bigg\{
  y\Bigg[
  2\left(
  y\,\omega_0 \left(\tau_{y}(z)\right)
  +
  \nu_0 \left(\tau_{y}(z)\right)
  \right)
  +1
  \Bigg]
  \nonumber\\
  &\quad\times
  \Bigg[
  \omega_0 \left(\tau_{y}(z)\right)
  -\sqrt{\lambda}
  \Big(
  y\,\theta\omega_0 \left(\tau_{y}(z)\right)
  +
  \theta\nu_0 \left(\tau_{y}(z)\right)
  \Big)
  \Bigg]+
  \sqrt{\lambda}\,
  \theta\omega_0 \left(\tau_{y}(z)\right)
  -
  \omega_0 \left(\tau_{y}(z)\right)^2
  \Bigg\}.
  \label{eq:R1-final}
  \end{align}
It is possible to show that the integrand in $R_1(z)+R_2(z)$ is a total derivative,
\begin{align}
    R_1(z)+R_2(z) &= \int_1^\infty \dd y\frac{\dd}{\dd y}
  \left[
  \sqrt{y^2-1}
  \Bigg(
  \frac12\omega_0 \left(\tau_{y}(z)\right)
  +\frac{y}{2}\omega_0 \left(\tau_{y}(z)\right)^2
  +\nu_0 \left(\tau_{y}(z)\right)
   \omega_0 \left(\tau_{y}(z)\right)
  \Bigg)  \right].
\end{align}
Consequently,
  \begin{align}
  R_1(z)+R_2(z)
  ={}&
  \Bigg[
  \sqrt{y^2-1}
  \Bigg(
  \frac12\omega_0 \left(\tau_{y}(z)\right)
  +\frac{y}{2}\omega_0 \left(\tau_{y}(z)\right)^2
  +\nu_0 \left(\tau_{y}(z)\right)
   \omega_0 \left(\tau_{y}(z)\right)
  \Bigg)
  \Bigg]_{y=1}^{y=\infty}.
  \end{align}

  At $y=1$, this expression vanishes because
  $\sqrt{y^2-1}=0$, while the leading potentials are finite.  At
  $y\to\infty$, the argument $\tau_{y}(z)=z e^{-\sqrt{\lambda}y}$ tends to zero,
  and both $\omega_0$ and $\nu_0$ vanish.  Therefore,
  \begin{equation}
  R_1(z)=-R_2(z).
  \label{eq:R1=R2-derivation}
  \end{equation}

\section{Integral transform}
\label{sectionIntegralTransform}
The leading- and subleading-order WKB equations involve the same integral transform. In this appendix we define this transform and prove the identity that allows us to solve the integral equations for $G_0(z)$ and $G_1(z)$. 
\begin{lem}
\label{lem:integraltransform} 
Define an integral transform $\mathsf{T}$ of a function $F(z)$, such that $F(0)=0$,
\begin{equation}
    \mathsf{T}[F](z) = \sqrt{\lambda}\int_1^\infty\dd y\int_{-1}^1\dd s\,
    \,F \left(ze^{\sqrt{\lambda}(s-y)}\right)
    \frac{s+y}{\sqrt{1-s^2}\sqrt{y^2-1}}.
\end{equation}
Then 
\begin{equation}
   \dfrac{\dd}{\dd z} \mathsf{T}[F](z) = \dfrac{\pi}{z}F(z). \label{integraltransformidentity-app}
\end{equation}
\end{lem}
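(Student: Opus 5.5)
The plan is to reduce the double integral defining $\mathsf{T}$ to a single convolution along the characteristic direction. Then the whole statement collapses to showing that a certain one-variable kernel is constant, equal to $\pi$.

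\emph{Step 1: reduce to a convolution.} Write $z=e^{\sqrt{\lambda}a}$ and $g(a)=F(e^{\sqrt{\lambda}a})$, so that $F\left(ze^{\sqrt{\lambda}(s-y)}\right)=g(a-u)$ with $u=y-s\geq 0$. Change variables from $(s,y)$ to $(s,u)$ in the region $s\in[-1,1]$, $y\geq1$. This gives
\begin{equation*}
\mathsf{T}[F](z)=\sqrt{\lambda}\int_0^\infty \dd u\, g(a-u)\,K(u),\qquad K(u)=\int_{\max(-1,1-u)}^{1}\frac{(2s+u)\,\dd s}{\sqrt{1-s^2}\sqrt{(s+u)^2-1}} .
\end{equation*}

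\emph{Step 2: the kernel is constant.} The key claim is that $K(u)=\pi$ for all $u>0$. To prove it I will take the Laplace transform in $u$ and undo the $\delta(y-s-u)$ constraint, so that the double integral factorizes. Define
\begin{equation*}
A(p)=\int_{-1}^1\frac{e^{ps}\,\dd s}{\sqrt{1-s^2}}=\pi I_0(p),\qquad B(p)=\int_1^\infty\frac{e^{-py}\,\dd y}{\sqrt{y^2-1}}=K_0(p).
\end{equation*}
The factor $s$ in the numerator $s+y$ produces $A'(p)$, and the factor $y$ produces $-B'(p)$. Hence
\begin{equation*}
\int_0^\infty e^{-pu}K(u)\,\dd u=A'B-AB'=\pi\big(I_1(p)K_0(p)+I_0(p)K_1(p)\big)=\frac{\pi}{p},
\end{equation*}
where the last equality is the modified Bessel Wronskian. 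By uniqueness of the Laplace transform, $K(u)=\pi$ for almost every $u>0$; continuity of $K$ then gives it for all $u>0$. As a sanity check, $K(u)\to\pi$ as $u\to\infty$, since the integrand tends to $1/\sqrt{1-s^2}$.

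\emph{Step 3: differentiate.} With $K\equiv\pi$ we get $\mathsf{T}[F](z)=\pi\sqrt{\lambda}\int_0^\infty g(a-u)\,\dd u=\pi\sqrt{\lambda}\int_{-\infty}^a g(b)\,\dd b$. Then $\frac{\dd}{\dd a}\mathsf{T}=\pi\sqrt{\lambda}\,g(a)$. Since $z\frac{\dd}{\dd z}=\frac{1}{\sqrt{\lambda}}\frac{\dd}{\dd a}$, this is $z\,\partial_z\mathsf{T}[F]=\pi F(z)$, which is the claim.

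The hypothesis $F(0)=0$ enters here. The lower end $b\to-\infty$ corresponds to the argument $ze^{-\sqrt{\lambda}u}\to0$, and $F(0)=0$ (with, say, $F(\zeta)=O(\zeta^\epsilon)$ near $0$) makes $\int_{-\infty}^a g$ converge. It also justifies the change of variables and the Fubini interchange in Steps 1–2, since the $1/\sqrt{y^2-1}$ tail is harmless once $g(a-u)$ decays as $u\to\infty$.

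The main obstacle is Step 2. A direct evaluation of $K(u)$ leads to elliptic-type integrals with a region of integration that changes at $u=2$, which is unpleasant. The Laplace-transform factorization sidesteps this, and the only thing to check carefully is that the piecewise range $\max(-1,1-u)$ is exactly what the $\delta$-constraint produces. A fallback would be to show $K'(u)=0$ directly by parametrizing $s=\cos\alpha$, $y=\cosh\beta$, but I would try the Wronskian route first.
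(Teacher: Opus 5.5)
Your proof is correct. Your Steps 1 and 3 coincide with the paper's: the paper also sets $\mu=y-s$ and obtains the same kernel with lower limit $\max(1-\mu,-1)$. It then differentiates $\pi\sqrt{\lambda}\int_0^\infty \dd\mu\,F(ze^{-\sqrt{\lambda}\mu})$ directly in $z$, using $F(0)=0$ to drop the boundary term.

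The routes differ only in how the kernel is shown to equal $\pi$. The paper does it pointwise and elementarily. With $\tau=s^2+\mu s-1$ one has $\dd\tau=(2s+\mu)\,\dd s$ and $(1-s^2)\big((s+\mu)^2-1\big)=\mu^2-\tau^2$. Both possible lower endpoints, $s=-1$ (for $\mu\geq 2$) and $s=1-\mu$ (for $\mu<2$), map to $\tau=-\mu$, while $s=1$ maps to $\tau=\mu$. Hence $K(\mu)=\int_{-\mu}^{\mu}\dd\tau/\sqrt{\mu^2-\tau^2}=\pi$ for every $\mu>0$. The ``elliptic-type'' obstacle you anticipated never arises, and the change of range at $\mu=2$ is harmless.

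Your Laplace-transform route is also sound:
\begin{itemize}
\item the interchange is justified by Tonelli, since $s+y\geq 0$ on the region;
\item $A=\pi I_0$ and $B=K_0$;
\item $I_0K_1+I_1K_0=1/p$ is the standard Wronskian.
\end{itemize}
It costs an appeal to Laplace uniqueness. Your continuity remark is unnecessary, because equality almost everywhere suffices when $K$ only enters under the $u$-integral. In exchange you get a structural explanation: the numerator $s+y$ turns the factorized transforms into the Wronskian combination $A'B-AB'$. That is why $\mathsf{T}$ is $\pi\theta^{-1}$ in disguise, and it suggests how to test other kernels for the same property.

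Two small remarks. First, your Step 3 as written takes $z=e^{\sqrt{\lambda}a}>0$. Since the kernel identity does not involve $z$, for complex $z$ (needed when $t$ runs over a contour) you can simply differentiate $\pi\sqrt{\lambda}\int_0^\infty F(ze^{-\sqrt{\lambda}u})\,\dd u$ in $z$ as the paper does. Second, your observation that $F(0)=0$ must be supplemented by a mild decay condition near $0$ (for instance $F(\zeta)=O(\zeta^\epsilon)$) for $\mathsf{T}[F]$ to converge is a genuine, if minor, sharpening of the stated hypothesis. The paper's proof tacitly needs the same condition.
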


\begin{proof}
Change variables to $\mu = y-s$. Then
\begin{equation}
    \mathsf{T}[F](z) = \sqrt{\lambda} \int_0^\infty \dd \mu F\left(ze^{-\sqrt{\lambda}\mu}\right) \int_{\max(1-\mu,-1)}^1\dd s\,\frac{2s+\mu}{\sqrt{1-s^2}\sqrt{(s+\mu)^2-1}}.
\end{equation}

To evaluate the inner $s$-integral, let $\tau = s^2 + \mu s -1$. Then the integrand simplifies to
\begin{equation}
    \dd s \frac{2s+\mu}{\sqrt{(1-s^2)\big((s+\mu)^2-1\big)}}=\frac{\dd \tau}{\sqrt{\mu^2-\tau^2}}.
\end{equation}
The integration region is $\tau \in [-\mu,\mu]$, giving
\begin{equation}
    \int_{-\mu}^\mu \frac{\dd \tau}{\sqrt{\mu^2-\tau^2}} = \pi.
\end{equation}
Therefore,
\begin{equation}
    \mathsf{T}[F](z) = \pi \sqrt{\lambda}\int_0^\infty\dd \mu\,F \left(ze^{-\sqrt{\lambda}\mu}\right).
\end{equation}
Differentiating and setting $q=ze^{-\sqrt{\lambda}\mu}$ gives
\begin{equation}
    \frac{\dd}{\dd z}\mathsf{T}[F](z) = \frac{\pi}{z}\int_0^z \dd qF'(q) = \frac{\pi}{z}\Big( F(z) -F(0)\Big)=\frac{\pi}{z}F(z).
\end{equation}
\end{proof}

\bibliographystyle{JHEP}

\providecommand{\href}[2]{#2}\begingroup\raggedright\endgroup

\end{document}